%
\documentclass[runningheads]{llncs}
\usepackage[T1]{fontenc}
%
\usepackage{graphicx}
%

\usepackage{cite}
\usepackage{graphicx}
\usepackage{amsmath}
\usepackage{yhmath} 

\usepackage{amsfonts} 
\usepackage{amssymb}
\usepackage{color}
\usepackage{latexsym}

\usepackage{algorithm}
\usepackage[noend]{algorithmic}
\usepackage{multirow} 
\usepackage{amsmath}

\newcommand{\old}[1]{{}}

\newcommand{\T}{{{\cal{T}}}}



\begin{document}

\title{Geometric Spanning Trees Minimizing the Wiener Index}
%
%

\author{A.\,Karim Abu-Affash\inst{1} \and
Paz Carmi\inst{2} \and
Ori Luwisch\inst{2} \and
Joseph S. B. Mitchell\inst{3}
}
\authorrunning{A. K. Abu-Affash, P. Carmi, O. Luwisch, and J. S. B. Mitchell}
%
\institute{Department of Software Engineering, Shamoon College of Engineering, Israel \\ \email{abuaa1@sce.ac.il} \and
Computer Science Department, Ben-Gurion University, Israel \\ \email{carmip@cs.bgu.ac.il} \\ \email{orilu@post.bgu.ac.il} \and
Department of Applied Mathematics and Statistics, Stony Brook University, USA \\ \email{joseph.mitchell@stonybrook.edu}
}

\maketitle

\begin{abstract}
The Wiener index of a network, introduced by the chemist Harry Wiener~\cite{Wiener1947}, is the sum of distances between all pairs of nodes in the network.  
This index, originally used in chemical graph representations of the non-hydrogen atoms of a molecule, is considered to be a fundamental and useful network descriptor. 
We study the problem of constructing geometric networks on point sets in Euclidean space that minimize the Wiener index: given a set $P$ of $n$ points in $\mathbb{R}^d$, the goal is to construct a network, spanning $P$ and satisfying certain constraints, that minimizes the Wiener index among the allowable class of spanning networks.

In this work, we focus mainly on spanning networks that are trees and we focus on problems in the plane ($d=2$). We show that any spanning tree that minimizes the Wiener index has non-crossing edges in the plane. Then, we use this fact to devise an $O(n^4)$-time algorithm that constructs a spanning tree of minimum Wiener index for points in convex position.
We also prove that the problem of computing a spanning tree on $P$ whose Wiener index is at most $W$, while having total (Euclidean) weight at most $B$, is NP-hard.

Computing a tree that minimizes the Wiener index has been studied in the area of communication networks, where it is known as the \emph{optimum communication spanning tree problem}.

\keywords{Wiener Index \and Optimum communication spanning tree \and Minimum routing cost spanning tree.}
\end{abstract}

\section{Introduction}
The \emph{Wiener index} of a weighted graph $G=(V,E)$ is the sum, $\sum_{u,v \in V} \delta_G(u,v)$, of the shortest path lengths in the graph between every pair of vertices, where $\delta_G(u,v)$ is the weight of the shortest (minimum-weight) path between $u$ and $v$ in $G$. 
The Wiener index was introduced by the chemist Harry Wiener in 1947~\cite{Wiener1947}. The Wiener index and its several variations have found applications in chemistry, e.g., in predicting the antibacterial activity of drugs and modeling crystalline phenomena. It has also has been used to give insight into various chemical and physical properties of molecules \cite{Nenad86} and to correlate the structure of molecules with their biological activity \cite{kier2012molecular}. The Wiener index has become part of the general scientific culture, and it is still the subject of intensive research \cite{Bonchev02,Kinkar11,Dobrynin2001,Kexiang14}. In its applications in chemistry, the Wiener index is most often studied in the context of unweighted graphs. 
The study of minimizing the sum of interpoint distances also arises naturally in the network design field, where the problem of computing a spanning tree of minimum Wiener index is known as the \emph{Optimum Communication Spanning Tree} (OCST) problem~\cite{Hu1974,Gonzalez2007}. 

Given a undirected graph $G=(V,E)$ and a (nonnegative) weight function on the edges of $G$, representing the delay on each edge, the routing cost $c(T)$ of a spanning tree $T$ of $G$ is the sum of the weights (delays) of the paths in $T$ between every pair of vertices: $c(T)=\sum_{u,v \in V} \delta_T(u,v)$, where $\delta_T(u,v)$ is the weight of the (unique) path between $u$ and $v$ in $T$. 
The OCST problem aims to find a minimum routing cost spanning tree of a given weighted undirected graph $G$, thereby seeking to minimize the expected cost of a path within the tree between two randomly chosen vertices.
The OCST was originally introduced by Hu~\cite{Hu1974} and is known to be NP-complete in graphs, even if all edge weights are 1~\cite{Johnson1978}.
Wu et al.~\cite{Wu2000} presented a polynomial time approximation scheme (PTAS) for the OCST problem. Specifically, they showed that the best $k$-star (a tree with at most $k$ internal vertices) yields a $(\frac{k + 3}{k + 1})$-approximation for the problem, resulting in a $(1+\varepsilon)$-approximation algorithm of running time $O\big(n^{2\lceil\frac{2}{\varepsilon}\rceil-2}\big)$.

While there is an abundance of research related to the Wiener index, e.g., computing and bounding the Wiener indexes of specific graphs or classes of graphs \cite{GRAOVAC91,Harary69,Mohar88} and explicit formulas for the Wiener index for special classes of graphs \cite{Bonchev77,mekenyan1983,SHI93,Wiener1947,weiszfeldPointWhichSum2009}, to the best of our knowledge, the Wiener index has not received much attention in geometric settings. In this work, we study the Wiener index and the optimum communication spanning tree problem in selected geometric settings, hoping to bring this important and highly applicable index to the attention of computational geometry researchers.

\paragraph{Our Contributions and Overview.}
Let $P$ be a set of $n$ points in the plane. we study the problem of computing a spanning tree on $P$ that minimizes the Wiener index when the underlying graph is the complete graph on $P$, with edge weights given by their Euclidean lengths.  
In Section~\ref{sec:prelim}, we prove that the optimal tree (that minimizes the Wiener index) has no crossing edges. 
As our main algorithmic result, in Section~\ref{sec:dp}, we give a polynomial-time algorithm to solve the problem when the points $P$ are in convex position; this result strongly utilizes the structural result that the edges of an optimal tree do not cross, which enables us to devise a dynamic programming algorithm to optimize.
Then, in Section~\ref{sec:hardness}, we prove that the ``Euclidean Wiener Index Tree Problem'', in which we seek a spanning tree on $P$ whose Wiener index is at most $W$, while having total (Euclidean) weight at most $B$, is (weakly) NP-hard.  Finally, in Section~\ref{sec:paths}, we discuss the problem of finding a minimum Wiener index \emph{path} spanning $P$.  

\paragraph{Related Work.}
A problem related to ours is the minimum latency problem, also known as the traveling repairman problem TRP: Compute a path, starting at point $s$, that visits all points, while minimizing the sum of the distances (the ``latencies'') along the path from $s$ to every other point (versus between \emph{all} pairs of points, as in the Wiener index). There is a PTAS for TRP (and the $k$-TRP, with $k$ repairmen) in the Euclidean plane and in weighted planar graphs~\cite{sitters2021polynomial}.

Wiener index optimization also arises in the context of computing a noncontracting embedding of one metric space into another (e.g., a line metric or a tree metric) in order to minimize the average distortion of the embedding (defined to be the sum of all pairs distances in the new space, divided by the sum of all pairs distances in the original space). It is NP-hard to minimize average distortion when embedding a tree metric into a line metric; there is a constant-factor approximation (based on the $k$-TRP) for minimizing the average distortion in embedding a metric onto a line (i.e., finding a spanning path of minimum Wiener index)~\cite{dhamdhere2006approximation}, which, using \cite{sitters2021polynomial}, gives a $(2+\varepsilon)$-approximation in the Euclidean plane.

A related problem that has recently been examined in a geometric setting is the computation of the Beer index of a polygon $P$, defined to be the probability that two randomly (uniformly) distributed points in $P$ being visible to each other~\cite{abrahamsen2022degree}; the same paper also studies the problem of computing the expected distance between two random points in a polygon, which is, like the Wiener index, based on computing the sum of distances (evaluated as an integral in the continuum) between all pairs of points. 

Another area of research that is related to the Wiener index is that of \emph{spanners}: Given a weighted graph $G$ and a real number $t > 1$, a \emph{$t$-spanner} of $G$ is a spanning sub-graph $G^*$ of $G$, such that $\delta_{G^*}(u,v) \leq t \cdot \delta_G(u,v)$, for every two vertices $u$ and $v$ in $G$. Thus, the shortest path distances in $G^*$ approximate the shortest path distances in the underlying graph $G$, and the parameter $t$ represents the approximation ratio. The smallest $t$ for which $G^*$ is a $t$-spanner of $G$ is known as the \emph{stretch factor}. There is a vast literature on spanners, especially in geometry (see, e.g., \cite{BoseCC12,BoseGS02,BoseHS18,Cardinal04localproperties,FiltserS16,LiW04,narasimhanGeometricSpannerNetworks2007})
%
In a geometric graph, $G$, the \emph{stretch factor} between two vertices, $u$ and $v$, is the ratio between the Euclidean length of the shortest path from $u$ to $v$ in $G$ and the Euclidean distance between $u$ and $v$. The \emph{average stretch factor} of $G$ is the average stretch factor taken over all pairs of vertices in $G$.
For a given weighted connected graph $G=(V,E)$ with positive edge weights and a positive value $W$, the \emph{average stretch factor spanning tree} problem seeks a spanning tree $T$ of $G$ such that the average stretch factor (over $\binom n 2$ pairs of vertices) is bounded by $W$. For points in the Euclidean plane, 
one can construct in polynomial time a spanning tree with constant average stretch factor~\cite{DBLP:journals/jocg/ChengKLS12}. 

\old{ 
In typical research on geometric spanners, given a set of points and a constant $t > 1$, the goal is to construct a $t$-spanner such that the number of edges, the weight, and the degree are minimized (see \cite{BoseCC12,bose2005constructing,BoseHS18,Cardinal04localproperties,FiltserS16,LiW04} and the book \cite{narasimhanGeometricSpannerNetworks2007} by Narasimhan and Smid for a comprehensive overview on geometric spanner networks).
The geometric Wiener index can be viewed as a generalization of geometric spanners. In geometric spanners, we require that the shortest path distance between \emph{any pair} of points is at most some constant times their distance in the underlying graph, while in the Wiener index we consider the shortest path distances of all the pairs. A nice connection between spanners and the Wiener index is that a $t$-spanner $G^*$ of a graph $G$ has a Wiener index at most $t$ times the Wiener index of $G$. However, the inverse does not hold. For example, there exists a set of points in the plane such that there is no tree with a constant stretch factor, while there is always a tree with a Wiener index at most twice the Wiener index of the complete graph over the set of points.
} 
\old{ Joe commented out for now:
Problem definition: Given a set $P \subseteq \mathbb{R}^2$ of $n$ points, construct spanning tree $T$ of $P$ such that the Wiener index of $T$ is at most $(2-c)$ times the Wiener index of the complete Euclidean graph over $P$ for some constant $c>0$ in polynomial time.
Another related goal is to prove that for any $P \subseteq \mathbb{R}^2$ of $n$ points \emph{there exists} spanning tree $T$ such that the Wiener index of $T$ is at most $(2-c)$ times the Wiener index of the complete Euclidean graph over $P$ for some constant $c>0$.
That is, given a set $P \subseteq \mathbb{R}^2$ of $n$ points, the goal is to construct a graph $G = (P,E)$, such that $W(G)$ is as low as possible, in $O(poly(n))$ time under various constrains on the resulted graph $G$. The distances are computed by the Euclidean norm, that is, for $u,v \in P$ such that $(u,v) \in E$, $d_G(u,v) = |uv|$. Clearly, if there are no constraints on the resulted graph, the complete Euclidean graph is always gets the optimal (minimal) Wiener index. Thus, the research is about solving the problem with constrains on the resulted graph.
Most of the work described in this thesis is for the case where the resulted graph $G$ must be a \emph{tree}. In the general cases the exact value of the Wiener index of the optimal tree is unknown, therefore we will try to construct a tree such that its Wiener index approximates the Wiener index of the complete Euclidean graph.
As mentioned before, it is known (and will be discussed) how to construct a tree that its Wiener index is at most \emph{twice} the Wiener index of the complete graph over these points.
In this thesis, we also explain why solutions that immediately come to mind, like construction of minimum spanning tree, may provide a bad solutions in terms of Wiener index. We also present some other constructions that approximate the optimal solution and theoretical result as along with empirical results. We explain the cases where our constructions do not provide good enough  approximation and discuss other ways to get good Wiener spanning tree for those cases.
Another structures, where $G$ must be a \emph{path} or a \emph{planar} graph will be covered more shortly in section \ref{Other Structures}.
} 

    

\section{Preliminaries}\label{sec:prelim}
Let $P$ be a set of $n$ points in the plane and let $G=(P,E)$ be the complete graph over $P$.
For each edge $(p,q) \in E$, let $w(p,q)=|pq|$ denote the weight of $(p,q)$, given by the Euclidean distance, $|pq|$, between $p$ and $q$.
Let $T$ be a spanning tree of $P$.
For points $p,q \in P$, let $\delta_T(p,q)$ denote the weight of the (unique) path between $p$ and $q$ in $T$.
Let $W(T) = \sum_{p,q \in P} \delta_T(p,q)$ denote the Wiener index of $T$, given by the sum of the weights of the paths in $T$ between every pair of points.
Finally, for a point $p \in P$, let $\delta_p(T) = \sum_{q \in P}\delta_{T}(p,q)$ denote the total weight of the paths in $T$ from $p$ to every point of $P$. 

\old{
Reminder: Mention what we know about the degree of a node in opt Wiener tree (vs constant degree for MST).

Reminder: Mention hardness results known:  only the bicriteria case (weakly hard)?
Can we prove the 2D case is NP-hard in general?
(Mention that 1D case is trivial?)
The hardness in graphs is from Knapsack, so only weakly hard.
Open(?): is the network design problem strongly hard?

Reminder: Mention the model of computation allows summing square roots and comparing them.
}
 
\begin{theorem}
Let $T^*$ be a spanning tree of $P$ that minimizes the Wiener index. Then, $T^*$ is planar.
\end{theorem}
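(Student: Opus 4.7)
The plan is to argue by contradiction: assume that $T^*$ is Wiener-optimal but contains two crossing edges $(a,b)$ and $(c,d)$, and exhibit a single-edge swap that strictly decreases the Wiener index. Because two segments that cross in their interiors share no endpoint, the four points $a,b,c,d$ are distinct. After removing both $(a,b)$ and $(c,d)$ from $T^*$, the tree breaks into three connected components; up to the symmetry of swapping the labels $c \leftrightarrow d$, I may assume $b$ lies in the same component as $c$. Call those components $A$ (containing $a$), $B$ (containing $b$ and $c$), and $D$ (containing $d$), with sizes $\alpha,\beta,\delta$ summing to $n$.

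The workhorse identity I would use is the standard rewriting of the Wiener index edge by edge,
\begin{equation*}
W(T) \;=\; \sum_{e \in T} w(e) \cdot n_e \cdot (n - n_e),
\end{equation*}
where $n_e$ is the size of one side of the bipartition of $P$ obtained by deleting $e$ from $T$. Now consider the two candidate single-edge swaps
\begin{align*}
T_1 &= \bigl(T^* \setminus \{(a,b)\}\bigr) \cup \{(a,c)\}, \\
T_2 &= \bigl(T^* \setminus \{(c,d)\}\bigr) \cup \{(b,d)\}.
\end{align*}
Both are valid spanning trees, and---this is the crucial point---the replacement edge in each case induces the same bipartition of $P$ as the edge it replaces: $(a,c)$ separates $A$ from $B \cup D$ exactly as $(a,b)$ did, and $(b,d)$ separates $D$ from $A \cup B$ exactly as $(c,d)$ did. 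All other edges are unchanged and retain their $n_e$ values, so the change in the Wiener index telescopes to
\begin{align*}
W(T_1) - W(T^*) &= (|ac| - |ab|)\,\alpha(\beta+\delta), \\
W(T_2) - W(T^*) &= (|bd| - |cd|)\,\delta(\alpha+\beta).
\end{align*}

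To close the argument, I would invoke the diagonal-versus-side inequality in the convex quadrilateral spanned by $\{a,b,c,d\}$: since $(a,b)$ and $(c,d)$ cross, these four points are in convex position with cyclic order $a,c,b,d$, so that $(a,b)$ and $(c,d)$ are the two diagonals. Hence $|ac| + |bd| < |ab| + |cd|$, which forces at least one of $|ac| < |ab|$ or $|bd| < |cd|$. Since $\alpha(\beta+\delta) > 0$ and $\delta(\alpha+\beta) > 0$, the corresponding swap produces a spanning tree with strictly smaller Wiener index, contradicting the optimality of $T^*$.

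The main obstacle I anticipate is recognizing that the naive uncrossing used for Euclidean TSP---simultaneously replacing the diagonals $(a,b),(c,d)$ by the sides $(a,c),(b,d)$---does not obviously help here, because the Wiener index weights each edge by a product of subtree sizes rather than just summing edge lengths, and those size factors need not line up favorably. The fix, and the conceptual heart of the proof, is to split the uncrossing into two separate single-edge swaps; because each swap preserves the partition of its replaced edge, the diagonal inequality can be distributed across the two changes, and at least one of them is guaranteed to have the correct sign.
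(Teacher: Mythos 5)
Your overall strategy --- remove the two crossing edges to get three components with the middle component containing one endpoint of each edge, perform two candidate single-edge swaps, and use the crossing-diagonal inequality $|ac|+|bd|<|ab|+|cd|$ to argue that at least one swap helps --- is exactly the paper's strategy (up to relabeling: the paper calls the crossing edges $(a,c)$ and $(b,d)$ and puts $a,b$ in the shared component). However, there is a genuine gap in your computation of the swap costs. The claim that after replacing $(a,b)$ by $(a,c)$ ``all other edges are unchanged and retain their $n_e$ values'' is false. The replacement re-attaches the component $A$ to the middle component $B$ at the vertex $c$ instead of at $b$, so every edge $e$ on the path from $b$ to $c$ inside $B$ has its bipartition changed: in $T^*$ deleting such an $e$ separates $A\cup B_b$ from $B_c\cup D$, while in $T_1$ it separates $B_b$ from $A\cup B_c\cup D$. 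Equivalently, computing over pairs, every $u\in A$ and $v\in B\cup D$ satisfies $\delta_{T_1}(u,v)-\delta_{T^*}(u,v)=(|ac|-|ab|)+\big(\delta_{BD}(c,v)-\delta_{BD}(b,v)\big)$, so the correct difference is $W(T_1)-W(T^*)=\alpha(\beta+\delta)(|ac|-|ab|)+\alpha\big(\delta_c(B\cup D)-\delta_b(B\cup D)\big)$, where $\delta_x(\cdot)$ denotes the sum of tree distances from $x$ to the vertices of the indicated subtree; the analogous extra term $\delta\big(\delta_b(A\cup B)-\delta_c(A\cup B)\big)$ appears for $T_2$. These re-attachment terms can have either sign and can dominate the length terms, so the diagonal inequality alone does not force one of the two swaps to decrease the Wiener index, and your conclusion does not follow as written.

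This is precisely the difficulty the paper's proof is built to handle: it writes out both differences including the re-attachment terms, assumes for contradiction that neither swap improves the tree, divides each resulting inequality by the size of the component being moved (your $\alpha$ and $\delta$) so that the two re-attachment terms cancel upon summation, and then reaches a contradiction using three strict triangle inequalities --- the crossing-diagonal inequality you identified, plus two inequalities of the form $|bd|+\delta_{T^*}(b,c)>|cd|$ that exploit $\delta_{T^*}(b,c)\ge |bc|$. If you add the missing terms to your two expressions and carry out this normalize-and-sum step, your argument becomes the paper's proof.
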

\begin{proof}
Assume towards a contradiction that there are two edges $(a,c)$ and $(b,d)$ in $T$ that cross each other.
Let $F$ be the forest obtained by removing the edges $(a,c)$ and $(b,d)$ from $T$. Thus $F$ contains three sub-trees. 
Assume, w.l.o.g., that $a$ and $b$ are in the same sub-tree $T_{ab}$, and $c$ and $d$ are in separated sub-trees $T_c$ and $T_d$, respectively; see Figure~\ref{fig:planarity}.
Let $n_{ab}$, $n_c$, and $n_d$ be the number of points in $T_{ab}$, $T_c$, and $T_d$, respectively. Thus,
\begin{align*}
                     W(T^*) & = W(T_{ab}) + n_c\cdot\delta_a(T_{ab}) + n_d\cdot\delta_b(T_{ab}) \\
                            & + W(T_c) + (n_{ab}+n_d)\cdot\delta_c(T_c) + n_c(n_{ab}+n_d)\cdot|ac| \\ 
                            & + W(T_d) + (n_{ab}+n_c)\cdot\delta_d(T_d) + n_d(n_{ab}+n_c)\cdot|bd| \\
                            & + n_c\cdot n_d \cdot \delta_{T^*}(a,b) \,.
\end{align*} 

\begin{figure}[htb]
  \centering
   \includegraphics[width=0.9\textwidth]{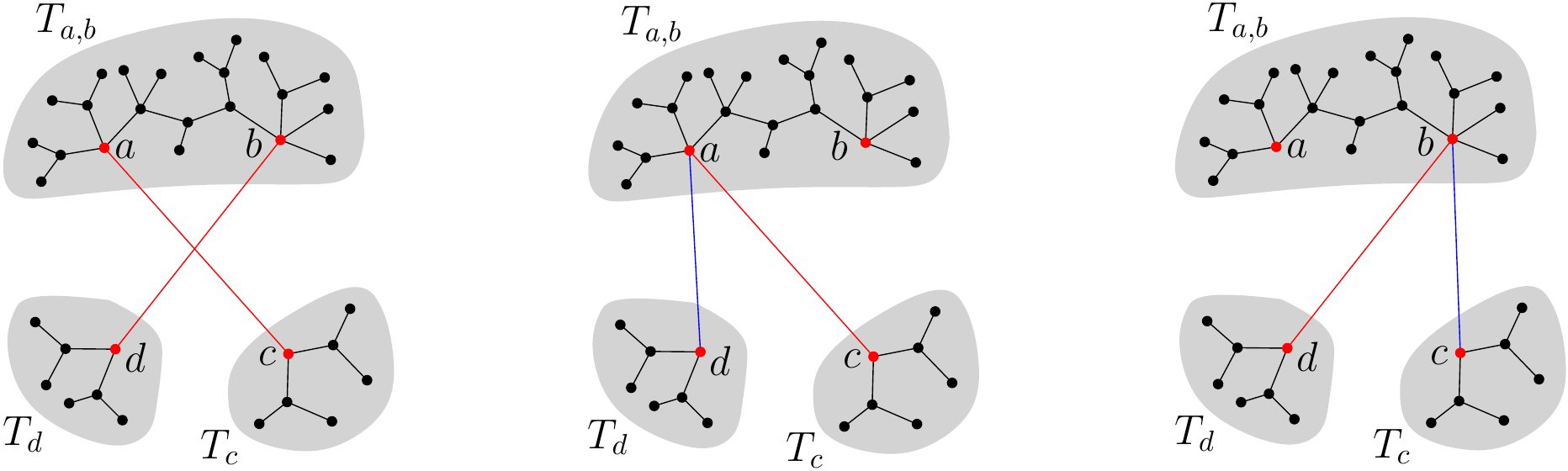}
    \caption{The trees $T^*$, $T'$, and $T''$ (from left to right).}
    \label{fig:planarity}
\end{figure}

Let $T'$ be the spanning tree of $P$ obtained from $T^*$ by replacing the edge $(b,d)$ by the edge $(a,d)$.
Similarly, let $T''$ be the spanning tree of $P$ obtained from $T^*$ by replacing the edge $(a,c)$ by the edge $(b,c)$.
Thus, 
\begin{align*}
                     W(T') & = W(T_{ab}) + (n_c + n_d)\cdot\delta_a(T_{ab}) \\
                            & + W(T_c) + (n_{ab}+n_d)\cdot\delta_c(T_c) + n_c(n_{ab}+n_d)\cdot|ac| \\ 
                            & + W(T_d) + (n_{ab}+n_c)\cdot\delta_d(T_d) + n_d(n_{ab}+n_c)\cdot|ad| \,,
\end{align*} 
and 
\begin{align*}
                     W(T'') & = W(T_{ab}) + (n_c + n_d)\cdot\delta_b(T_{ab}) \\
                            & + W(T_c) + (n_{ab}+n_d)\cdot\delta_c(T_c) + n_c(n_{ab}+n_d)\cdot|bc| \\ 
                            & + W(T_d) + (n_{ab}+n_c)\cdot\delta_d(T_d) + n_d(n_{ab}+n_c)\cdot|bd| \,.
\end{align*}
\old{
Therefore,
\begin{align*}
                     W(T^*) - W(T') & =  n_d\big(\delta_b(T_{ab}) - \delta_a(T_{ab})\big) + n_d(n_{ab}+n_c)\big(|bd|-|ad|\big) 
                             + n_c\cdot n_d \cdot \delta_{T^*}(a,b) \,,
\end{align*}
and
\begin{align*}
                     W(T^*) - W(T'') & =  n_c\big(\delta_a(T_{ab}) - \delta_b(T_{ab})\big) + n_c(n_{ab}+n_d)\big(|ac|-|bc|\big) 
                             + n_c\cdot n_d \cdot \delta_{T^*}(a,b) \,.
\end{align*}
If $W(T^*) - W(T') > 0$ or $W(T^*) - W(T'') > 0$, then this contradicts the minimality of $T^*$, and we are done.
Assume that $W(T^*) - W(T') \le 0$ and $W(T^*) - W(T'') \le 0$. Since $n_c > 0$ and $n_d > 0$, we have
\begin{align*}
                     \delta_b(T_{ab}) - \delta_a(T_{ab}) + (n_{ab}+n_c)\big(|bd|-|ad|\big) + n_c\cdot \delta_{T^*}(a,b) \le 0 \,,
\end{align*}
and
\begin{align*}
                     \delta_a(T_{ab}) - \delta_b(T_{ab}) + (n_{ab}+n_d)\big(|ac|-|bc|\big) + n_d \cdot \delta_{T^*}(a,b) \le 0 \,.
\end{align*} 
Thus, by summing these inequalities, we have
\begin{align*}
                     (n_{ab}+n_c)\big(|bd|-|ad|\big) + (n_{ab}+n_d)\big(|ac|-|bc|\big) + (n_c+n_d)\cdot \delta_{T^*}(a,b) \le 0  \,.
\end{align*}
That is, 
\begin{align*}
                     n_{ab}\big(|bd| + |ac| - |ad| - |bc|\big) &+ n_c\big(|bd| + \delta_{T^*}(a,b) - |ad| \big)  
										+ n_d\big(|ac| + \delta_{T^*}(a,b) - |bc| \big) \le 0  \,.
\end{align*}
Since $n_{ab}, n_c, n_d > 0$, and, by the triangle inequality, $|bd| + |ac| - |ad| - |bc| > 0$, $|bd| + \delta_{T^*}(a,b) - |ad| > 0$, and $|ac| + \delta_{T^*}(a,b) - |bc| > 0$, this is a contradiction.
}
Therefore,
\begin{align*}
                     W(T^*) - W(T') & =  n_d\big(\delta_b(T_{ab}) - \delta_a(T_{ab})\big) + n_d(n_{ab}+n_c)\big(|bd|-|ad|\big) \\
                            & + n_c\cdot n_d \cdot \delta_{T^*}(a,b) \,,
\end{align*}
and
\begin{align*}
                     W(T^*) - W(T'') & =  n_c\big(\delta_a(T_{ab}) - \delta_b(T_{ab})\big) + n_c(n_{ab}+n_d)\big(|ac|-|bc|\big) \\
                            & + n_c\cdot n_d \cdot \delta_{T^*}(a,b) \,.
\end{align*}
If $W(T^*) - W(T') > 0$ or $W(T^*) - W(T'') > 0$, then this contradicts the minimality of $T^*$, and we are done.

Assume that $W(T^*) - W(T') \le 0$ and $W(T^*) - W(T'') \le 0$. Since $n_c > 0$ and $n_d > 0$, we have
\begin{align*}
                     \delta_b(T_{ab}) - \delta_a(T_{ab}) + (n_{ab}+n_c)\big(|bd|-|ad|\big) + n_c\cdot \delta_{T^*}(a,b) \le 0 \,,
\end{align*}
and
\begin{align*}
                     \delta_a(T_{ab}) - \delta_b(T_{ab}) + (n_{ab}+n_d)\big(|ac|-|bc|\big) + n_d \cdot \delta_{T^*}(a,b) \le 0 \,.
\end{align*} 
Thus, by summing these inequalities, we have
\begin{align*}
                     (n_{ab}+n_c)\big(|bd|-|ad|\big) + (n_{ab}+n_d)\big(|ac|-|bc|\big) + (n_c+n_d)\cdot \delta_{T^*}(a,b) \le 0  \,.
\end{align*}
That is, 
\begin{align*}
                     n_{ab}\big(|bd| + |ac| - |ad| - |bc|\big) &+ n_c\big(|bd| + \delta_{T^*}(a,b) - |ad| \big) \\ 
										&+ n_d\big(|ac| + \delta_{T^*}(a,b) - |bc| \big) \le 0  \,.
\end{align*}
Since $n_{ab}, n_c, n_d > 0$, and, by the triangle inequality, $|bd| + |ac| - |ad| - |bc| > 0$, $|bd| + \delta_{T^*}(a,b) - |ad| > 0$, and $|ac| + \delta_{T^*}(a,b) - |bc| > 0$, this is a contradiction.~\qed
\end{proof}

\section{An Exact Algorithm for Points in Convex Position}\label{sec:dp}
Let $\{p_1,p_2,\ldots,p_n\}$ denote the vertices of the convex polygon that is obtained by connecting the points in $P$, ordered in clockwise-order with an arbitrary first point $p_1$; see Figure~\ref{fig:DP}. 
For simplicity of presentation, we assume that all indices are taken modulo $n$.
For each $1 \le i \le j \le n$, let $P[i,j] \subseteq P$ be the set $\{p_i,p_{i+1},\dots ,p_j\}$.
Let $T_{i,j}$ be a spanning tree of $P[i,j]$, and let $W(T_{i,j})$ denote its Wiener index. 
For a point $x\in\{i,j\}$, let $\delta_x(T_{i,j})$ be the total weight of the shortest paths from $p_x$ to every point of $P[i,j]$ in $T_{i,j}$. That is $\delta_x(T_{i,j}) = \sum_{p \in P[i,j]}\delta_{T_{i,j}}(p_x,p)$.
\begin{figure}[htp]
    \centering
        \includegraphics[width=0.58\textwidth]{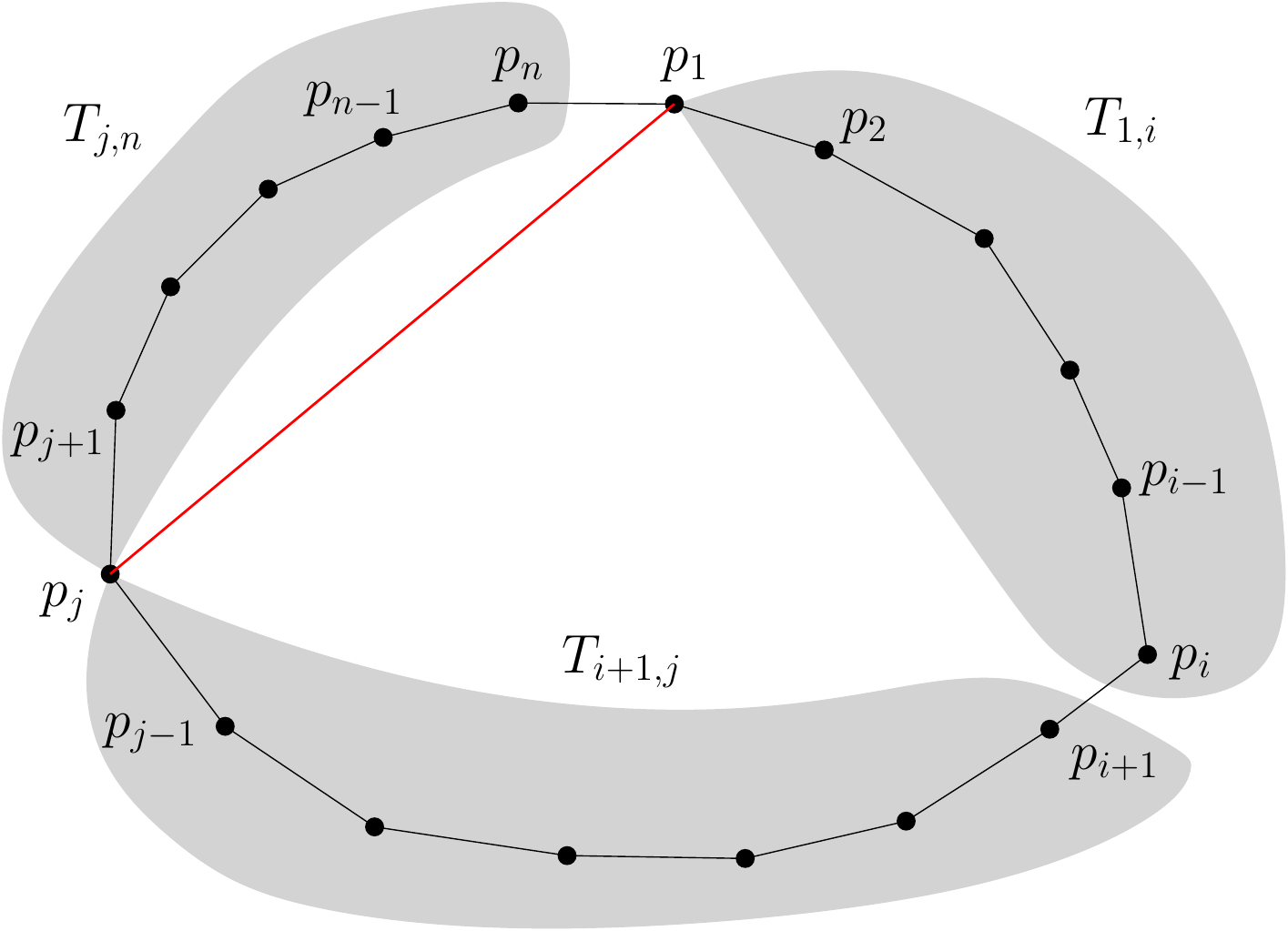}
    \caption{The convex polygon that is obtained from $P$. $p_1$ is connected to $p_j$ in $T^*$.}
    \label{fig:DP}
\end{figure}

Let $T^*$ be a minimum Wiener index tree of $P$ and let $W^*$ be its Wiener index. 
Notice that, for any $1\le i<j\le n$, the points in $P[i,j]$ are in convex position, since the points in $P$ are in convex position. 
Since $T^*$ is a spanning tree, each point, particularly $p_1$, is adjacent to at least one edge in $T^*$. 
Let $p_j$ be the point with maximum index $j$ that is connected to $p_1$ in $T^*$.
Moreover, there exists an index $1 \le i \le j$ such that all the points in $P[1,i]$ are closer to $p_1$ than to $p_j$ in $T^*$, and all the points in $P[i+1,j]$ are closer to $p_j$ than to $p_1$ in $T^*$.
Hence, 
\begin{align}
                     W^* & = W(T_{1,i}) + (n-i)\cdot\delta_1(T_{1,i}) \\
                            & + W(T_{i+1,j}) + (n-j+i)\cdot\delta_j(T_{i+1,j}) \\
                            & + W(T_{j,n}) + (j-1)\cdot\delta_j(T_{j,n}) \\
                            & + i(n-i)\cdot|p_1p_j|.
\end{align}

Thus, in order to compute $W^*$, we compute (1), (2), (3), and (4) for each $i$ between 2 and $n$ and for each $j$ between 1 and $i$, and take the minimum over the sum of these values. 
In general, for every $1\le i<j\le n$, let $W_j[i,j] = W(T_{i,j}) + (n-j+i-1)\cdot\delta_j(T_{i,j})$ be the minimum value obtained by a spanning tree $T_{i,j}$ of $P[i,j]$ rooted at $p_j$. Similarly, let $W_i[i,j] = W(T_{i,j}) + (n-j+i-1)\cdot\delta_i(T_{i,j})$ be the minimum value obtained by a spanning tree $T_{i,j}$ of $P[i,j]$ rooted at $p_i$. Thus, we can compute $W_j[i,j]$ and $W_i[i,j]$ recursively using the following formulas; see also Figure~\ref{fig:DP1}.

$$
W_j[i,j] =\underset{k \le l < j}{\underset{i \le k < j}{\min}} \ \big \{ W_k[i,k] + W_k[k,l] + W_j[l+1,j] + (l-i+1)(n-l+i-1)\cdot|p_kp_j| \big \} \,,
$$
and
$$
W_i[i,j] =\underset{i \le l < k}{\underset{i < k \le j}{\min}} \ \big \{ W_i[i,l] + W_k[l+1,k] + W_j[k,j] + (j-l)(n-j+l)\cdot|p_ip_k|\big \} \,.
$$
\begin{figure}[htp]
    \centering
        \includegraphics[width=0.8\textwidth]{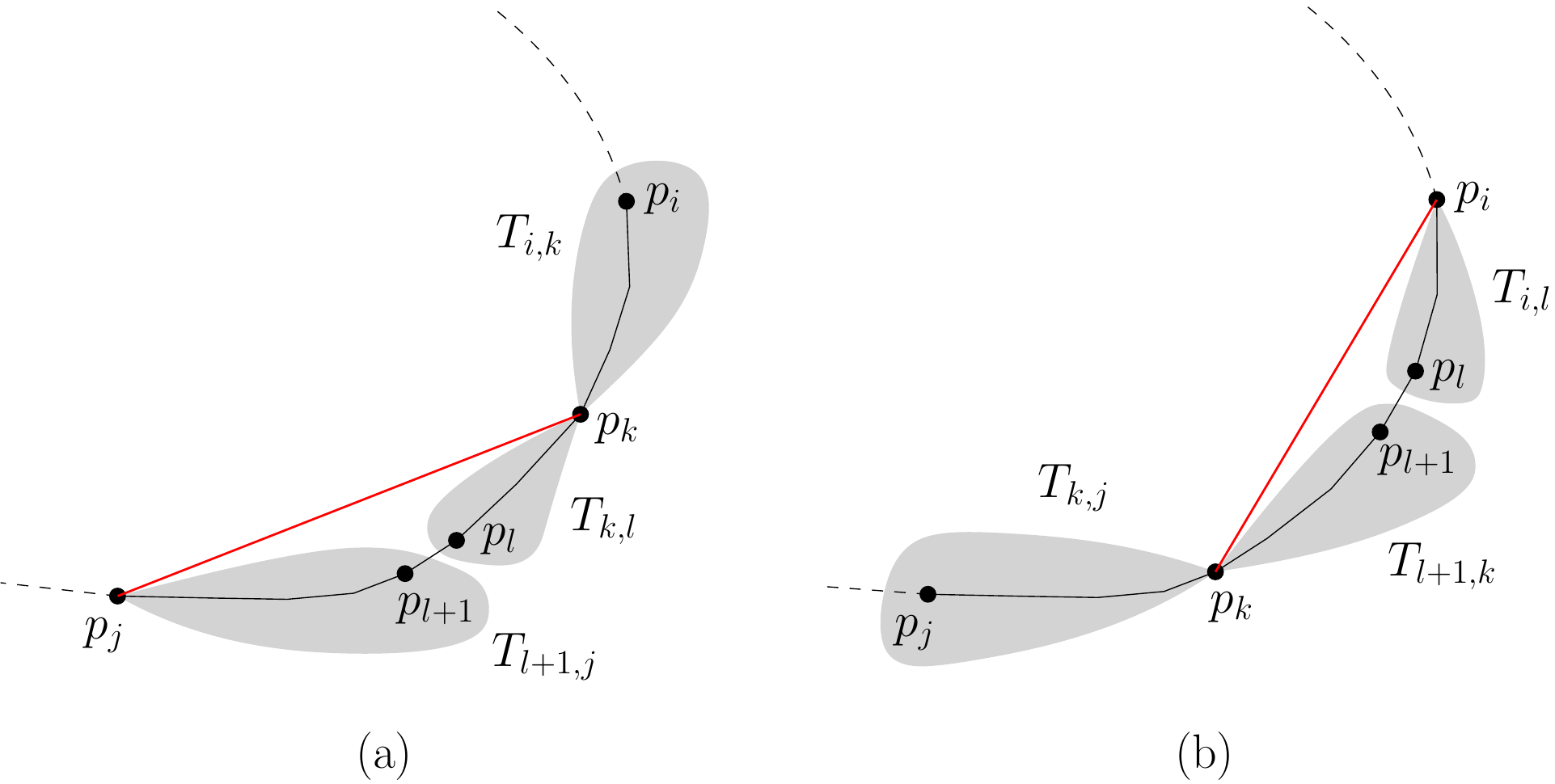}
    \caption{A sub-problem defined by $P[i,j]$. (a) Computing $W_j[i,j]$. (b) Computing $W_i[i,j]$.}
    \label{fig:DP1}
\end{figure}

We compute $W_j[i,j]$ and $W_i[i,j]$, for each $1 \le i < j \le n$, using dynamic programming as follows. 
We maintain two tables $\stackrel{\rightarrow}{M}$ and $\stackrel{\leftarrow}{M}$ each of size $n \times n$, such that $\stackrel{\rightarrow}{M}[i,j] = W_j[i,j]$ and $\stackrel{\leftarrow}{M}[i,j] = W_i[i,j]$, for each $1 \le i < j \le n$. 
We fill in the tables using Algorithm~\ref{algo:DP}.

\floatname{algorithm}{Algorithm}
\begin{algorithm}[ht]
\caption{$ComputeOptimal(P)$ } \label{algo:DP}
\begin{algorithmic}[1]

\STATE $n \leftarrow |P|$
\STATE \textbf{for} each $i \leftarrow 1$ to $n$ \textbf{do} \\ 
\quad  $\stackrel{\rightarrow}{M}[i,i] \leftarrow 0$ \\
\quad  $\stackrel{\leftarrow}{M}[i,i] \leftarrow 0$ \\

\STATE \textbf{for} each $j \leftarrow n$ to $1$ \textbf{do} \\
\quad   \textbf{for} each $i \leftarrow j$ to $n$ \textbf{do} \\
	            \quad \quad \ $\stackrel{\rightarrow}{M}[i,j] \leftarrow \underset{k \le l < j}{\underset{i \le k < j}{\min}} \ \big \{ \stackrel{\rightarrow}{M}[i,k] + \stackrel{\leftarrow}{M}[k,l] + \stackrel{\rightarrow}{M}[l+1,j] + (l-i+1)(n-l+i-1)\cdot|p_kp_j| \big \}$ \\
                \quad \quad \ $\stackrel{\leftarrow}{M}[i,j] \leftarrow \underset{i \le l < k}{\underset{i < k \le j}{\min}} \ \big \{ \stackrel{\leftarrow}{M}[i,l] + \stackrel{\rightarrow}{M}[l+1,k] + \stackrel{\rightarrow}{M}[k,j] + (j-l)(n-j+l)\cdot|p_ip_k|\big \}$ \\

\STATE \textbf{return} $\stackrel{\leftarrow}{M}[1,n]$
	
\end{algorithmic}
\end{algorithm}

Notice that when we fill the cell $\stackrel{\rightarrow}{M}[i,j]$, all the cells $\stackrel{\rightarrow}{M}[i,k]$, $\stackrel{\leftarrow}{M}[k,l]$, and $\stackrel{\rightarrow}{M}[l+1,j]$, for each $i \le k < j$ and for each $k \le l < j$, are already computed, and when we fill the cell $\stackrel{\leftarrow}{M}[i,j]$, all the cells $\stackrel{\leftarrow}{M}[i,l]$, $\stackrel{\rightarrow}{M}[l+1,k]$, and $\stackrel{\rightarrow}{M}[k,j]$, for each $i < k \le j$ and for each $i \le l < k$, are already computed. Therefore, each cell in the table is computed in $O(n^2)$ time, and the whole table is computed in $O(n^4)$ time.

The following theorem summarizes the result of this section.
\begin{theorem} \label{thm:opt}
Let $P$ be a set of $n$ points in convex position. Then, a spanning tree of $P$ of minimum Wiener index can be computed in $O(n^4)$ time.
\end{theorem}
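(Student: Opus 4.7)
The plan is to exploit the non-crossing property of the optimal tree (established in Section~\ref{sec:prelim}) together with convex position to obtain a clean recursive decomposition. In convex position, a non-crossing edge of $T^*$ is a chord of the convex polygon and therefore separates the vertex set into two contiguous arcs of the form $P[i,j]$. Consequently, removing any edge of $T^*$ decomposes the tree into subtrees each of which spans a contiguous arc, so subproblems indexed by arcs are well-defined, and there are only $O(n^2)$ of them.

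To set up the recursion, I would first fix the vertex $p_1$ and consider the largest-index neighbor $p_j$ of $p_1$ in $T^*$. The edge $(p_1,p_j)$ then partitions $T^*$ into subtrees over $P[1,j]$ and $P[j,n]$, and within $P[1,j]$ some index $i$ separates the vertices closer to $p_1$ from those closer to $p_j$; this yields the four-term expression for $W^*$ given in the excerpt. To recurse into subproblems, I introduce the endpoint-rooted quantities $W_j[i,j]$ and $W_i[i,j]$, defined so that the coefficient in front of the $\delta$-term equals $n-(j-i+1)$, the number of outside points. This calibration is the crucial bookkeeping trick: with it, each edge $e$ of a rooted subtree on $P[i,j]$ contributes exactly $|S_e|\cdot(n-|S_e|)\cdot|e|$ to the value, where $S_e$ is the subset of $P$ lying on the non-root side of $e$ in the full tree. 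This is precisely the edge's share of the global Wiener index, so sums of such values across a decomposition telescope correctly.

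Given this invariant, the recurrence for $W_j[i,j]$ is obtained by removing from a candidate rooted subtree the edge at $p_j$ whose removal disconnects $p_i$ from $p_j$. Let $p_k$ be the other endpoint and let $P[i,l]$ be the arc spanned by the component containing $p_k$ (such an $l$ with $k\le l<j$ exists by non-crossing and convex position). The remaining arc $P[l+1,j]$ is itself a rooted subproblem contributing $W_j[l+1,j]$, and the edge $(p_k,p_j)$ contributes $(l-i+1)(n-l+i-1)\cdot|p_kp_j|$ since $|S|=l-i+1$. The component on $P[i,l]$ is rooted at the internal vertex $p_k$, which is not an endpoint of its arc, so I further split it at $p_k$ into arcs $P[i,k]$ and $P[k,l]$, each rooted at $p_k$; their edge sets partition the edges of the component and they share only the vertex $p_k$, whose self-distance is zero. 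This gives $W_k[i,k]+W_k[k,l]$, and minimizing over $k$ and $l$ yields the stated formula; the recurrence for $W_i[i,j]$ is symmetric.

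The main obstacle I expect is verifying this bookkeeping, namely that the coefficient $n-(j-i+1)$ is computed with respect to each subproblem's own arc and is consistent when a subtree rooted at an internal vertex $p_k$ is split at $p_k$ into two pieces whose outside sets differ from each other and from the parent's. Once the edge-share invariant is checked at the base case and preserved under each recursive split, correctness of the recurrence follows by induction on the arc length $j-i$, since the non-crossing property guarantees that some valid choice of $(k,l)$ reconstructs the optimal substructure. For the running time, both tables have $O(n^2)$ cells and each is computed by minimizing over $O(n^2)$ pairs $(k,l)$; filling them in order of increasing arc length then yields the claimed $O(n^4)$ bound.
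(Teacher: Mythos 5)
Your proposal is correct and follows essentially the same route as the paper: the same arc subproblems $P[i,j]$, the same endpoint-rooted quantities $W_i[i,j]$ and $W_j[i,j]$ with the coefficient $n-(j-i+1)$, the same recurrences obtained by splitting at the neighbor $p_k$ of the root, and the same $O(n^2)\times O(n^2)$ running-time count. Your explicit ``edge-share invariant'' ($|S_e|(n-|S_e|)\cdot|e|$ per edge) is a cleaner justification of the bookkeeping than the paper spells out, but it is the same algorithm and the same proof.
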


\section{Hardness Proof  }\label{sec:hardness}
Let $P$ be a set of points in the plane and let $T$ be a spanning tree of $P$.
We define the Wiener index of $T$ as $W(T) = \sum_{p,q \in P} \delta_T(p,q)$ and the weight of $T$ as $wt(T)=\sum_{(p,q) \in T} |pq|$, where $\delta_T(p,q)$ is the length of the path between $p$ and $q$ in $T$ and $|pq|$ is the Euclidean distance between $p$ and $q$. For a edge $(p,q)$, let $N_T(p)$ (resp., $N_T(q)$) be the number of points in $T$ that are closer to $q$ than $q$ (resp., to $q$ than $p$). It is well known~\cite{Knor2015} that $W(T)$ can be formulated as: 
$$W(T) = \sum_{(p,q) \in T} N_T(p)\cdot N_T(q)\cdot |pq|.$$

In this section, we prove that the following problem is NP-hard. \\
\paragraph{\textbf{Euclidean Wiener Index Tree Problem: }} Given a set $P$ of points in the plane, a cost $W$, and a budget $B$, decide whether there exists a spanning tree $T$ of $P$, such that $W(T) \le W$ and $wt(T) \le B$.
\begin{theorem}
The Euclidean Wiener Index Tree Problem is weakly NP-hard.
\end{theorem}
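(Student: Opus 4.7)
The plan is to prove weak NP-hardness by reduction from the Partition problem: given positive integers $a_1,\ldots,a_n$ with $\sum_i a_i = 2S$, decide whether there exists $I\subseteq\{1,\ldots,n\}$ with $\sum_{i\in I} a_i = S$. From any such instance, I will construct in polynomial time a planar point set $P$ together with thresholds $W,B$ so that $P$ admits a spanning tree with $W(T)\le W$ and $wt(T)\le B$ if and only if the Partition instance is a YES-instance.

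The construction I have in mind uses two ``hub'' points $u,v$ placed a small distance $\epsilon$ apart, and, for each item $i$, a ``gadget'' consisting of a tightly packed cluster of $a_i$ auxiliary points within a tiny radius around an anchor point $p_i$. The anchor $p_i$ is placed on an appropriate hyperbola with foci $u,v$ so that $|p_i u|$ and $|p_i v|$ both lie near a large constant $M$ and differ by a prescribed amount proportional to $a_i$. Choosing $M\gg \epsilon$ and the intra-cluster radius $\ll \epsilon$, I would argue that any spanning tree of weight at most $B$ is forced into the same combinatorial template: the bridge edge $(u,v)$, a single anchor edge per gadget (either $(p_i,u)$ or $(p_i,v)$), and the internal cluster edges. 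Such a tree is determined, up to trivial intra-cluster choices, by the subset $J\subseteq\{1,\ldots,n\}$ of gadgets anchored at $u$.

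Using the identity $W(T)=\sum_{e\in T} N_T(e_1)\cdot N_T(e_2)\cdot|e|$ stated earlier, one then writes both $wt(T)$ and $W(T)$ as explicit functions of $S_J := \sum_{i\in J} a_i$ (and possibly $|J|$). The weight $wt(T)$ comes out affine in $S_J$, so choosing $B$ appropriately enforces $S_J\le S$ or $S_J\ge S$ depending on the sign of the anchor offsets. The Wiener index inherits from the bridge edge a contribution $(1+S_J)(1+(2S-S_J))\,\epsilon$, since the two subtree sizes on the sides of $(u,v)$ are exactly $1+S_J$ and $1+(2S-S_J)$ (each cluster contributes $a_i$ points to its side, plus the hub). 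Together with the anchor contribution, which is linear in $S_J$, the parameters $\epsilon$, $M$, and the anchor offsets can be tuned so that the two constraints $wt(T)\le B$ and $W(T)\le W$ push $S_J$ from opposite sides and are jointly satisfied exactly when $S_J=S$, i.e., when a balanced partition exists.

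The main obstacle will be the quantitative calibration that makes the opposing pulls meet precisely at $S_J=S$: the bridge contribution is a concave quadratic of magnitude $\Theta(\epsilon S^2)$ while the anchor contribution is linear in $S_J$ with coefficient of order $M$, so the ratio $\epsilon/M$ must be chosen in a narrow regime where both signals are simultaneously tight. A secondary obstacle is ruling out spanning trees that deviate from the two-star-plus-bridge template, e.g.\ trees with cross-gadget edges $(p_i,p_j)$ or nested clusters; this will be handled by taking $M$ asymptotically much larger than every other scale in the construction, so that even a single non-template edge already consumes more than the slack in $B$. Since all chosen parameters have bit-length polynomial in the bit-length of the Partition input, this yields a polynomial-time reduction and hence weak NP-hardness.
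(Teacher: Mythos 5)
Your high-level strategy (reduce from Partition, use the identity $W(T)=\sum_{(p,q)\in T}N_T(p)\cdot N_T(q)\cdot|pq|$, and squeeze a subset sum between the weight budget and the Wiener budget) matches the paper's, but your gadget has two problems, one immediately disqualifying and one structural. First, you encode each integer $a_i$ as a cluster of $a_i$ points. In Partition the integers are given in binary, so $\sum_i a_i$ can be exponential in the input size and your point set cannot be constructed in polynomial time; a weak-hardness reduction from Partition must encode the $a_i$ in distances (as the paper does, placing $l_i,r_i$ at distances $x_i$ and $\tfrac{1}{2}x_i$ from $p_i$ and from each other), not in cardinalities.

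Second, and more fundamentally, the two-hub template cannot produce the opposing pressures you need. A gadget hanging off a hub by a single anchor edge of length $d_i$ contributes $d_i$ to $wt(T)$ and $(a_i+1)(m-a_i-1)\,d_i$ to $W(T)$, and the combinatorial factor is the same whichever hub it attaches to; hence the choice-dependent parts of $wt(T)$ and of the linear part of $W(T)$ are essentially proportional increasing functions of the same quantity. The two constraints therefore carve out nested, not opposing, feasible regions: both are best served by anchoring every gadget to its nearer hub, independently of the Partition instance. The only non-proportional term is the bridge contribution $\epsilon\,|U|\,|V|$, which is a concave quadratic \emph{maximized} at the balanced split, so it penalizes balance rather than rewarding it; no tuning of $\epsilon/M$ fixes that sign. (A further issue you glossed over: with cluster sizes $a_i$, the factors $(a_i+1)(m-a_i-1)$ make the anchor part of $W(T)$ depend on $\sum_{i\in J}a_i(a_i+1)(m-a_i-1)$ rather than on $S_J$.) The paper instead extracts the tension from the internal topology of each gadget: wiring $l_i,r_i$ both to $p_i$ costs weight $2x_i$ with every satellite at depth one, while the chain $p_i$--$l_i$--$r_i$ costs only $\tfrac{3}{2}x_i$ but pushes one point to depth two, adding roughly $\tfrac{1}{2}m\,x_i$ to the Wiener index. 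Thus the budget $B$ forces the chained gadgets to have $\sum x_i\ge\tfrac{1}{2}R$ while the bound $W$ forces $\sum x_i\le\tfrac{1}{2}R$, closing the squeeze. You would need to replace the hub choice with some such depth-based trade-off for your reduction to work.
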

\begin{proof}
Inspired by Carmi and Chaitman-Yerushalmi~\cite{Carmi13}, we reduce the Partition problem, which is known
to be NP-hard~\cite{Garey1979}, to the Euclidean Wiener Index Tree Problem. 
In the Partition problem, we are given a set $X=\{x_1,x_2,\ldots,x_n\}$ of $n$ positive integers with even $R=\sum_{i=1}^{n}x_i$, and the goal is to decide whether there is a subset $S\subseteq X$, such that $\sum_{x_i \in S} x_i = \frac{1}{2}R$.

Given an instance $X=\{x_1,x_2,\ldots,x_n\}$ of the Partition problem, where $x_i$'s are integers, 
we construct a set $P$ of $m = n^3 + 3n$ points as follows.
The set $P$ consists of $n$ points $p_1,p_2,\ldots,p_n$ located equally spaced on a circle of radius $nR$, a cluster $C$ of $n^3$ points located on the center of the circle.
Moreover, for each $1 \le i \le n$, we locate two points $l_i$ and $r_i$ both of distance $x_i$ from $p_i$ and the distance between them is $\frac{1}{2}x_i$; see Figure~\ref{fig:treeReduction}.
Finally, we set 
\begin{align*}
B=& \ \Big( n^2 + \frac{7}{4}\Big)R \text{, and  } \\ 
W=& \ 3n^2\big( m-3 \big)R + \Big( \frac{9}{4}m - \frac{13}{4}\Big)R \\
 =& \ 3n^5R + \frac{45}{4}n^3R -9n^2R + \frac{27}{4}nR - \frac{13}{4}R \,. 
\end{align*}
\begin{figure}[H]
    \centering
        \includegraphics[width=0.7\textwidth]{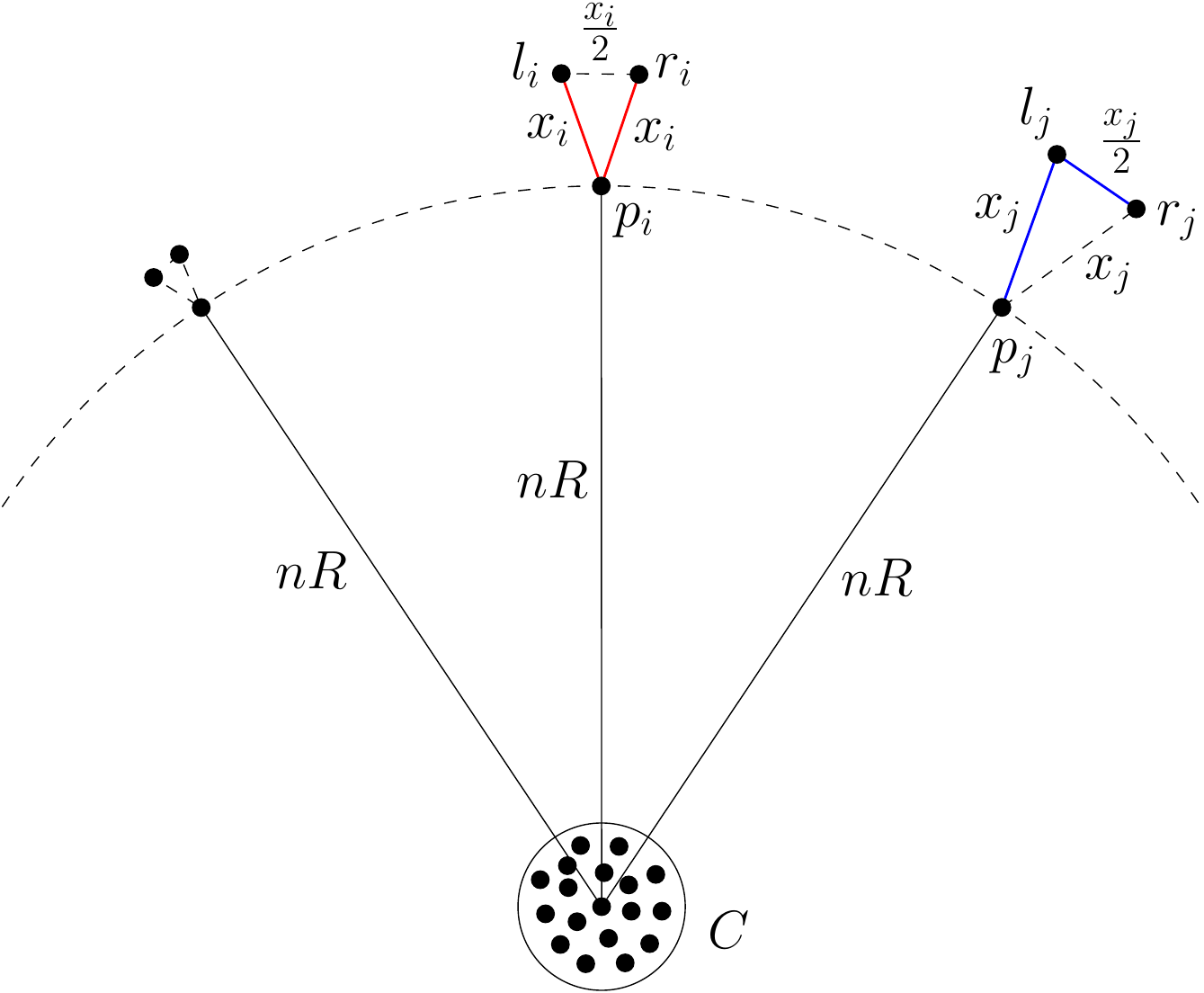}
    \caption{The set $P$ produced by the reduction. Connecting the points $l_j$, $r_j$, and $p_j$ for $x_j \in S$ (blue) and connecting the points $l_i$, $r_i$, and $p_i$ for $x_i \in  X\setminus S$ (red).}
    \label{fig:treeReduction}
\end{figure}

Assume that there exists a set $S\subseteq X$, such that $\sum_{x_i \in S} x_i = \frac{1}{2}R$.
We construct a spanning tree $T$ for the points in $P$ as follows:
\begin{itemize}
    \item Select an arbitrary point $s \in C$ and connect it to all the points in $C\cup\{p_1,p_2,\ldots,p_n\}$ as a star centered at $s$.b
    \item For each $1 \le i \le n$, connect the points $p_i$ and $l_i$.
    \item For each $x_i \in S$, connect the points $p_i$ and $r_i$.
    \item For each $x_i \in X\setminus S$, connect the points $r_i$ and $l_i$; see Figure~\ref{fig:treeReduction}.
\end{itemize}
It is easy to see that $wt(T)= n^2R + R + \frac{3}{4}R = \big( n^2 + \frac{7}{4}\big)R = B$. 
Moreover, the Wiener index of $T$ is:
\begin{align*}
 W(T) = & \sum_{(p,q) \in T} N_T(p)\cdot N_T(q)\cdot |pq|  \\ 
      = & \ 3 (n^3 + 3n -3) n^2 R +  \sum_{x_i \in S'} 2(n^3 + 3n -1) x_i  \\ 
            & \quad + \sum_{x_i \notin S'} \Big( (n^3 + 3n -1) \frac{1}{2}x_i\Big) +  \sum_{x_i \notin S'} \Big( 2 (n^3 + 3n -2) x_i \Big)  \\
         = & \ 3 n^5 R + 9n^3 R -9n^2 R +  (n^3 + 3n -1)R  \\
           & \quad + \frac{1}{4}(n^3 + 3n -1) R + (n^3 + 3n -2) R \\ 
         = & \ 3n^5R + \frac{45}{4}n^3R -9n^2R + \frac{27}{4}nR - \frac{13}{4}R = W \,.
\end{align*}

Conversely, let $T'$ be a spanning tree of $P$ with $wt(T') \le B$ and  $W(T') \le W$.

\begin{claim}\label{cl:numberOfEdgesIsn}
The number of edges $(p, q) \in T'$, such that $p \in C$ and $q \in P \setminus C$ is $n$. 
\end{claim}
\begin{proof}
Assume there are $k$ such edges.
The weight of each such edge is at least $nR$ thus the $wt(T') \ge k nR$,  since $B = (n^2 + \frac{7}{4}) R$ we get that $k \le n$.
We have 
\begin{align*}
    W(T') > &  \ (3 k nR  +  3 (n-k)(nR + 2\pi R)) n^3 \\
          = & \ (3 k n  +  3 n^2  + 6n \pi - 3kn - 6 k \pi ) n^3 R \\
          = &  \ (    3 n^2  + 6 \pi (n  - k)) n^3 R \\
          = &  \  3 n^5 R   + 6 \pi (n  - k) n^3 R \,.
\end{align*}
Thus, if $ k < n $, then we get that $W(T') > 3 n^5 R   + 6 \pi  n^3 R > W$, for sufficiently large $n$. \qed
\end{proof}

Let $G_i = \{p_i, l_i, r_i \}$, for every $1 \le i \le n$. 
From the proof of Claim~\ref{cl:numberOfEdgesIsn}, if follows that 
for every $1 \le i \le n$, there is an exactly one edge $(p, q)$ in $T'$, where $q \in G_i$ and $p \in C$.
Moreover, it is easy to see that $q = p_i$. 
Thus, in every $G_i$, we have $(p_i, l_i) \in T'$ or  $(p_i, r_i) \in T'$. 
 Assume w.l.o.g., that $(p_i, l_i) \in T'$.
Therefore, either $(p_i, r_i) \in T'$ or $(l_i, r_i) \in T'$.
Let $S' \subseteq X$, such that $x_i \in S'$ if and only if $(p_i,r_i) \in T'$, and let  $R' =\sum_{x_i \in S'} x_i$. 

Thus, to finish the proof we show that if  $R' \neq \frac{1}{2}R$, then
either $wt(T') > B$ or $W(T) > W$.

\noindent
\textbf{Case 1:} $R' > \frac{1}{2}R$. In this case, we have
\begin{align*}
    wt(T') \ge & \ n^{2} R + \sum_{x_i \in S'} 2 x_i + \sum_{x_i \notin S'} \frac{3}{2} x_i \  =   \  n^{2} R  + 2R' + \frac{3}{2}(R-R')  \\ 
    =& \ n^{2} R + \frac{1}{2}R' + \frac{3}{2}R  \  >  \ n^{2} R + \frac{1}{4}R + \frac{3}{2}R  \ =  \ \big( n^2 + \frac{7}{4}\big)R \ = B \,.
\end{align*}
Therefore,  $wt(T') > B$.

\noindent
\textbf{Case 2:} $R' < \frac{1}{2}R$. In this case, we have 
\begin{align*}
W(T) = & \sum_{(p,q) \in T} N_T(p)\cdot N_T(q)\cdot |pq|  \\ 
      = & \ 3 (n^3 + 3n -3) n^2 R +  \sum_{x_i \in S'} 2(n^3 + 3n -1) x_i  \\ 
            & \quad + \sum_{x_i \notin S'} \Big( (n^3 + 3n -1) \frac{1}{2}x_i\Big) +  \sum_{x_i \notin S'} \Big( 2 (n^3 + 3n -2) x_i \Big)  \\
         = & \ 3 n^5 R + 9n^3 R -9n^2 R +  2(n^3 + 3n -1)R'  \\
           & \quad + \frac{1}{2}\Big(n^3 + 3n -1\Big) (R-R') + 2(n^3 + 3n -2) (R-R') \\ 
\end{align*}
\begin{align*}
        = &  \ 3 n^5 R + 9n^3 R -9n^2 R + 2 (n^3 + 3n -2)R  \\
          & \quad  -  \Big(\frac{1}{2}\Big(n^3 + 3n -1\Big) -2\Big)R'  +  \frac{1}{2}\Big(n^3 + 3n -1\Big) R   \\ 
          & \quad  -  \Big(\frac{1}{2}\Big(n^3 + 3n -1\Big) -2\Big)R'  +  \frac{1}{2}\Big(n^3 + 3n -1\Big) R   \\             
         > &  \ 3 n^5 R + 9n^3 R - 9n^2 R + 2 (n^3 + 3n -2)R  \\ 
          & \quad  -  \frac{1}{2}\Big(\frac{1}{2}\Big(n^3 + 3n -1\Big) -2\Big)R  +  \frac{1}{2}\Big(n^3 + 3n -1\Big) R   \\  
        = & \ 3n^5R + \frac{45}{4}n^3R -9n^2R + \frac{27}{4}nR - \frac{13}{4}R = W \,.
\end{align*}

\old{ 
\begin{align*}
    W(T') = & \ 3 (n^3 + 3n -3) n^2 R +  \sum_{x_i \notin S'} (2 (n^3 + 3n -2) x_i)  \\ 
            & \quad                     + \sum_{x_i \in S'} ( (n^3 + 3n -1) x_i) +  \sum_{x_i \in S'} (  (n^3 + 3n -1) x_i)  \\
            & \quad   + \sum_{x_i \notin S'} ( (n^3 + 3n -1) \frac{1}{2} x_i) \\
          > &  \ (3 n^5  + \frac{45}{4}n^3  -9n^2  +  \frac{27}{4} n  -  \frac{13}{4} )  R  = W\\        
\end{align*}
Therefore,  $W(T') > W$. \qed


\begin{align*}
    W(T') = & \ 3 (n^3 + 3n -3) n^2 R +  \sum_{x_i \notin S'} (2 (n^3 + 3n -2) x_i)  \\ 
            & \quad                     + \sum_{x_i \in S'} ( (n^3 + 3n -1) x_i) +  \sum_{x_i \in S'} (  (n^3 + 3n -1) x_i)  \\
            & \quad   + \sum_{x_i \notin S'} ( (n^3 + 3n -1) \frac{1}{2} x_i) \\
         = & \ 3 n^5 R + 9n^3 R -9n^2 R +  2 (n^3 + 3n -2)(R- R')  \\ 
            & \quad  +   (n^3 + 3n -1) R' +   (n^3 + 3n -1)R'  \\
            & \quad   +  (n^3 + 3n -1) \frac{1}{2}(R- R') \\
         = &  \ 3 n^5 R + 9n^3 R -9n^2 R + 2 (n^3 + 3n -2)R- 2 (n^3 + 3n -2)R'  \\ 
            & \quad  +   (n^3 + 3n -1) R' +   (n^3 + 3n -1)R'  \\
            & \quad   +  \frac{1}{2}(n^3 + 3n -1)  R  -   \frac{1}{2}(n^3 + 3n -1)R' \\
         = &  \ 3 n^5 R + 9n^3 R -9n^2 R + 2 (n^3 + 3n -2)R  \\ 
          & \quad ( n^3 + 3n -1 -2 (n^3 + 3n -2)    +   n^3 + 3n -1  - \frac{1}{2}(n^3 + 3n -1) )R'  \\
            & \quad   +  \frac{1}{2}(n^3 + 3n -1)  R   \\   
        = &  \ 3 n^5 R + 9n^3 R -9n^2 R + 2 (n^3 + 3n -2)R  \\ 
          & \quad  -  (\frac{1}{2}(n^3 + 3n -1) -2)R'  +  \frac{1}{2}(n^3 + 3n -1)  R   \\             
         > &  \ 3 n^5 R + 9n^3 R - 9n^2 R + 2 (n^3 + 3n -2)R  \\ 
          & \quad  -  \frac{1}{2}(\frac{1}{2}(n^3 + 3n -1) -2)R  +  \frac{1}{2}(n^3 + 3n -1)  R   \\       
        = &  \ (3 n^5  + 11n^3  -9n^2  +  6n -4  - \frac{1}{2}(\frac{1}{2}(n^3 + 3n -1) -2)  +  \frac{1}{2}(n^3 + 3n -1))  R \\
        = &  \ (3 n^5  + \frac{45}{4}n^3  -9n^2  +  \frac{27}{4} n  -  \frac{13}{4} )  R \\    
\end{align*}
}

\old{
Given an instance $X=\{x_1,x_2,\ldots,x_n\}$ of the Partition problem, where $x_i$'s are integers, 
we construct a set $P$ of $2n^3 + 3n$ points as follows.
The set $P$ consists of two clusters $A$ and $B$ each of $n^3$ points. The points of $A$ are located 
on coordinate $(0,0)$ and the points of $B$ are located on coordinate $(R+n+1,0)$.
For each $1\le i \le n$, we locate a gadget $G_i$ consisting of three points $l_i$, $r_i$, and $p_i$, such that $l_i$ is on coordinate $(i + \sum_{i=1}^{i-1}x_i,0)$, $r_i$ is on $(i + \sum_{i=1}^{i}x_i , 0)$, and $p_i$ is on a positive $y$ coordinate of distance $\alpha x_i$ from both $l_i$ and $r_i$, where 
$\frac{1}{2} < \alpha < 1$, assume here $\alpha = \frac{2}{3}$; see Figure~\ref{fig:treeReduction}. 
We set 
\begin{align*}
B=& \ n+1 + \frac{3}{2}R \text{, and  } \\ 
W=& \ \big((n+1)R + \frac{7}{6}R\big)n^6 + 3n^5 R + 7 n^4 R \,.
\end{align*}
\begin{figure}[htp]
    \centering
        \includegraphics[width=0.98\textwidth]{treeReduction.pdf}
    \caption{Top: the set $P$ produced by the reduction. Bottom: connecting the points $l_i$, $r_i$, and $p_i$ when $x_i \in S$ (left) and when $x_i \in X\setminus S$ (right).}
    \label{fig:treeReduction}
\end{figure}

Assume that there exists a set $S\subseteq X$, such that $\sum_{x_i \in S} x_i = \frac{1}{2}R$.
We construct a spanning tree $T$ for the points in $P$ as follows:
\begin{itemize}
    \item Select an arbitrary point $a \in A$ and connect it to $l_1$ and to all the other points in $A$ as a star centered at $a$.
    \item Select an arbitrary point $b \in B$ and connect it to $r_n$ and to all the other points in $B$ as a star centered at $b$.
    \item For each $1 \le i \le n-1$, connect the points $r_i$ and $l_{i+1}$.
    \item For each $1 \le i \le n$, connect the points $l_i$ and $p_i$.
    \item For each $x_i \in S$, connect the points $l_i$ and $r_i$.
    \item For each $x_i \in X\setminus S$, connect the points $p_i$ and $r_i$; see Figure~\ref{fig:treeReduction}.
\end{itemize}
It is easy to see that $wt(T)=n+1 + \frac{2}{3}R + \frac{1}{2}R + \frac{2}{6}R = n+1 + \frac{3}{2}R = B$. 
Before computing the Wiener index of $T$, observe that: 
\begin{enumerate}
    \item $\delta_T(a,b) = n+1 + \frac{1}{2}R + \frac{4}{3}R = n+1 + \frac{7}{6}R$.
    \vspace{0.1cm}
    \item $ \sum_{i=1}^{n}(\delta_T(a,l_i) + \delta_T(l_i,b)) = n\delta_T(a,b) = n\big(n+1 + \frac{7}{6}R\big)$.
    \vspace{0.1cm}
    \item $ \sum_{i=1}^{n}(\delta_T(a,r_i), + \delta_T(r_i,b)) = n\delta_T(a,b) = n\big(n+1 + \frac{7}{6}R\big)$.
    \vspace{0.1cm}
    \item $ \sum_{i=1}^{n}(\delta_T(a,p_i) + \delta_T(p_i,b) )  =  n\delta_T(a,b) + \frac{2}{3}R =  n\big(n+1 + \frac{7}{6}R\big) + \frac{2}{3}R$.
    \vspace{0.2cm}
    \item $ \sum_{i=1}^{n} (\delta_T(p_i,l_i)+ \delta_T(p_i,r_i) + \delta_T(l_i,r_i) ) = (\frac{8}{3} + \frac{10}{3})\frac{1}{2}R= 3R$.
    \vspace{0.1cm}
    \item $ \sum_{i=1}^{n-1}\sum_{j=i+1}^{n}(\delta_T(q_i,q_j) ) \le n^2\delta_T(a,b) = n^2 \big(n+1 + \frac{7}{6}R\big) $, where $q_i \in \{l_i,r_i,p_i\}$ and $q_j \in \{l_j,r_j,p_j\}$.  
\end{enumerate}
Moreover, we have,
\begin{align*}  
W(T) =& \ n^6\delta_T(a,b)  \\ 
  +&  \ n^3 \sum_{i=1}^{n}(\delta_T(p_i,a) + \delta_T(l_i,a) + \delta_T(r_i,a)  
 + \delta_T(p_i,b) + \delta_T(l_i,b) + \delta_T(r_i,b) ) \\
 +& \ \sum_{i=1}^{n} (\delta_T(p_i,l_i)+ \delta_T(p_i,r_i) + \delta_T(l_i,r_i) ) \\
 +& \ \sum_{i=1}^{n-1}\sum_{j=i+1}^{n}(\delta_T(p_i,p_j) + \delta_T(p_i,l_j) + \delta_T(p_i,r_j)) \\
 +& \ \sum_{i=1}^{n-1}\sum_{j=i+1}^{n}(\delta_T(l_i,p_j) + \delta_T(l_i,l_j) + \delta_T(l_i,r_j)) \\
 +& \ \sum_{i=1}^{n-1}\sum_{j=i+1}^{n}(\delta_T(r_i,p_j) + \delta_T(r_i,l_j) + \delta_T(r_i,r_j)) 
  \end{align*}
 Thus, by plugging 1-6 we get 
 \begin{align*}  
W(T) \le & \  n^6\big(n+1 + \frac{7}{6}R\big) +  n^3 \Big(3n\big(n+1 + \frac{7}{6}R\big) +  \frac{2}{3}R\Big) + 3R \\
 & + 3n^2 \big(n+1 + \frac{7}{6}R\big) + 3n^2 \big(n+1 + \frac{7}{6}R\big) + 3n^2 \big(n+1 + \frac{7}{6}R\big)  \\
 = & \ n^6\big(n+1 + \frac{7}{6}R\big) + 3n^5 R + \frac{13}{2} n^4 R + \Theta(n^3)R \\
 \le & \  n^6\big(n+1 + \frac{7}{6}R\big) + 3n^5 R + 7 n^4 R \,.
  \end{align*}

Let $T'$ be a tree, and let $S' \subseteq X$, such that $x_i in S'$ if and only if 
$(l_i,r_i) \in T'$. 
To finish the proof we need to show that if  $\sum_{x_i \in S'} x_i \neq \frac{1}{2}R$, then
Either $wt(T') > B$ or $W(T) > W$.

\noindent
\textbf{Case 1:} $\sum_{x_i \in S'} x_i > \frac{1}{2}R$.
In this case, it is easy to see that $wt(T') > B = (n+\frac{5}{2})R$

\noindent
\textbf{Case 2:} $\sum_{x_i \in S'}x_i < \frac{1}{2}R$.
$\delta_T(a,b) >  \frac{1}{2}R -1   + (\frac{1}{2}R +1)\frac{4}{3}  + (n+1)R  =  \frac{7}{6} R + \frac{1}{3} + (n+1)R$

\begin{align*}  
W(T) =& \ n^6\delta_T(a,b)  \\ 
  +&  \ n^3 \sum_{i=1}^{n}(\delta_T(p_i,a) + \delta_T(l_i,a) + \delta_T(r_i,a)  
 + \delta_T(p_i,b) + \delta_T(l_i,b) + \delta_T(r_i,b) ) \\
 +& \ \sum_{i=1}^{n-1}\sum_{j=i+1}^{n}(\delta_T(p_i,p_j) + \delta_T(p_i,l_j) + \delta_T(p_i,r_j)) \\
 +& \ \sum_{i=1}^{n-1}\sum_{j=i+1}^{n}(\delta_T(l_i,p_j) + \delta_T(l_i,l_j) + \delta_T(l_i,r_j)) \\
 +& \ \sum_{i=1}^{n-1}\sum_{j=i+1}^{n}(\delta_T(r_i,p_j) + \delta_T(r_i,l_j) + \delta_T(r_i,r_j)) \\
 +& \ \sum_{i=1}^{n} (\delta_T(p_i,l_i)+ \delta_T(p_i,r_i) + \delta_T(l_i,r_i)    ) \\
 < &  \ n^6\delta_T(a,b)  +  n^3 \sum_{i=1}^{n}(\delta_T(p_i,a) + \delta_T(l_i,a) + \delta_T(r_i,a) 
  \end{align*}
}

\old{
Inspired by Carmi and Chaitman-Yerushalmi~\cite{Carmi13}, we reduce the Partition problem, which is known
to be NP-hard~\cite{Garey1979,Karp1972}, to the Euclidean Wiener Index Problem. 
In the Partition problem, we are given a set $X=\{x_1,x_2,\ldots,x_n\}$ of $n$ positive integers with even $R=\sum_{i=1}^{n}x_i$, and the goal is to decide whether there is a subset $S\subseteq X$, such that $\sum_{x_i \in S} x_i = \frac{1}{2}R$.

Given an instance $X=\{x_1,x_2,\ldots,x_n\}$ of the Partition problem, where $x_i$'s are integers, 
we construct a set $P$ of $2n^3 + 3n$ points as follows.
The set $P$ consists of two clusters $A$ and $B$ each of $n^3$ points. The points of $A$ are located 
on coordinate $(0,0)$ and the points of $B$ are located on coordinate $((n+2)R,0)$.
For each $1\le i \le n$, we locate three points $l_i$, $r_i$, and $p_i$, such that $l_i$ is on coordinate 
$(iR + \sum_{i=1}^{i-1}x_i,0)$, $r_i$ is on $(iR+ \sum_{i=1}^{i}x_i , 0)$, and $p_i$ is on a positive $y$ coordinate of distance $\alpha x_i$ from both $l_i$ and $r_i$, where 
$\frac{1}{2} < \alpha < 1$, assume $\frac{2}{3}$; see Figure~\ref{fig:treeReduction}. We set $B=(n+\frac{5}{2})R$ and $W=\big((n+1)R + \frac{7}{6}R\big)n^6 + 3n^5 R + 7 n^4 R$.
\begin{figure}[htp]
    \centering
        \includegraphics[width=0.98\textwidth]{treeReduction.pdf}
    \caption{Top: the set $P$ produced by the reduction. Bottom: connecting the points $l_i$, $r_i$, and $p_i$ when $x_i \in S$ (left) and when $x_i \in X\setminus S$ (right).}
    \label{fig:treeReduction}
\end{figure}

Assume that there exists a set $S\subseteq X$, such that $\sum_{x_i \in S} x_i = \frac{1}{2}R$.
We construct a spanning tree $T$ for the points in $P$ as follows:
\begin{itemize}
    \item Select an arbitrary point $a \in A$ and connect it to $l_1$ and to all the other points in $A$ as a star centered at $a$.
    \item Select an arbitrary point $b \in B$ and connect it to $r_n$ and to all the other points in $B$ as a star centered at $b$.
    \item For each $1 \le i \le n-1$, connect the points $r_i$ and $l_{i+1}$.
    \item For each $1 \le i \le n$, connect the points $l_i$ and $p_i$.
    \item For each $x_i \in S$, connect the points $l_i$ and $r_i$.
    \item For each $x_i \in X\setminus S$, connect the points $p_i$ and $r_i$; see Figure~\ref{fig:treeReduction}.
\end{itemize}
It is easy to see that $wt(T)=(n+1)R + \frac{2}{3}R + \frac{1}{2}R + \frac{4}{6}R = (n+\frac{5}{2})R = B$.
Before computing the Wiener index of $T$, observe that: 
\begin{enumerate}
    \item $\delta_T(a,b) = (n+1)R + \frac{1}{2}R + \frac{4}{3}R = (n+1)R + \frac{7}{6}R$.
    \item $ \sum_{i=1}^{n}(\delta_T(a,l_i) + \delta_T(l_i,b)) = n\delta_T(a,b) = n((n+1)R + \frac{7}{6}R)$.
    \item $ \sum_{i=1}^{n}(\delta_T(a,r_i), + \delta_T(r_i,b)) = n\delta_T(a,b) = n((n+1)R + \frac{7}{6}R)$.
    \item $ \sum_{i=1}^{n}(\delta_T(a,p_i) + \delta_T(p_i,b) )  =  n\delta_T(a,b) + \frac{2}{3}R =  n((n+1)R + \frac{7}{6}R) + \frac{2}{3}R$.
    \item $ \sum_{i=1}^{n} (\delta_T(p_i,l_i)+ \delta_T(p_i,r_i) + \delta_T(l_i,r_i) ) = (\frac{8}{3} + \frac{10}{3})\frac{1}{2}R= 3R$.
    \item $ \sum_{i=1}^{n-1}\sum_{j=i+1}^{n}(\delta_T(q_i,q_j) ) \le n^2\delta_T(a,b) = n^2 ((n+1)R + \frac{7}{6}R) $, where $q_i \in \{l_i,r_i,p_i\}$ and $q_j \in \{l_j,r_j,p_j\}$.  
\end{enumerate}

We have,
\begin{align*}  
W(T) =& \ n^6\delta_T(a,b)  \\ 
  +&  \ n^3 \sum_{i=1}^{n}(\delta_T(p_i,a) + \delta_T(l_i,a) + \delta_T(r_i,a)  
 + \delta_T(p_i,b) + \delta_T(l_i,b) + \delta_T(r_i,b) ) \\
 +& \ \sum_{i=1}^{n-1}\sum_{j=i+1}^{n}(\delta_T(p_i,p_j) + \delta_T(p_i,l_j) + \delta_T(p_i,r_j)) \\
 +& \ \sum_{i=1}^{n-1}\sum_{j=i+1}^{n}(\delta_T(l_i,p_j) + \delta_T(l_i,l_j) + \delta_T(l_i,r_j)) \\
 +& \ \sum_{i=1}^{n-1}\sum_{j=i+1}^{n}(\delta_T(r_i,p_j) + \delta_T(r_i,l_j) + \delta_T(r_i,r_j)) \\
 +& \ \sum_{i=1}^{n} (\delta_T(p_i,l_i)+ \delta_T(p_i,r_i) + \delta_T(l_i,r_i)    )
  \end{align*}
 Thus, by plugging 1-6 we get 
 \begin{align*}  
W(T) \le & \  n^6(\frac{7}{6}R + (n+1)R) \\ 
  +&  \  n^3 (3n(\frac{7}{6}R + (n+1)R ) +  \frac{2}{3}R) \\
 +& \ 3n^2 (\frac{7}{6}R + (n+1)R) \\
 +& \ 3n^2 (\frac{7}{6}R + (n+1)R)\\
 +& \ 3n^2 (\frac{7}{6}R + (n+1)R) \\
 +& \ 3R \\
 =& \ n^6(\frac{7}{6}R + (n+1)R) + 3n^5 R + \frac{13}{2} n^4 R + \Theta(n^3)R \\
 \le & \  n^6(\frac{7}{6}R + (n+1)R) + 3n^5 R + 7 n^4 R \,.
  \end{align*}

Let $T'$ be a tree, and let $S' \subseteq X$, such that $x_i in S'$ if and only if 
$(l_i,r_i) \in T'$. 
To finish the proof we need to show that if  $\sum_{x_i \in S'} x_i \noteq \frac{1}{2}R$, then
Either $wt(T') > B$ or $W(T) > W$.

\noindent
\textbf{Case 1:} $\sum_{x_i \in S'} x_i > \frac{1}{2}R$.
In this case, it is easy to see that $wt(T') > B = (n+\frac{5}{2})R$

\noindent
\textbf{Case 2:} $\sum_{x_i \in S'}x_i < \frac{1}{2}R$.
$\delta_T(a,b) >  \frac{1}{2}R -1   + (\frac{1}{2}R +1)\frac{4}{3}  + (n+1)R  =  \frac{7}{6} R + \frac{1}{3} + (n+1)R$

\begin{align*}  
W(T) =& \ n^6\delta_T(a,b)  \\ 
  +&  \ n^3 \sum_{i=1}^{n}(\delta_T(p_i,a) + \delta_T(l_i,a) + \delta_T(r_i,a)  
 + \delta_T(p_i,b) + \delta_T(l_i,b) + \delta_T(r_i,b) ) \\
 +& \ \sum_{i=1}^{n-1}\sum_{j=i+1}^{n}(\delta_T(p_i,p_j) + \delta_T(p_i,l_j) + \delta_T(p_i,r_j)) \\
 +& \ \sum_{i=1}^{n-1}\sum_{j=i+1}^{n}(\delta_T(l_i,p_j) + \delta_T(l_i,l_j) + \delta_T(l_i,r_j)) \\
 +& \ \sum_{i=1}^{n-1}\sum_{j=i+1}^{n}(\delta_T(r_i,p_j) + \delta_T(r_i,l_j) + \delta_T(r_i,r_j)) \\
 +& \ \sum_{i=1}^{n} (\delta_T(p_i,l_i)+ \delta_T(p_i,r_i) + \delta_T(l_i,r_i)    ) \\
 < &  \ n^6\delta_T(a,b)  +  n^3 \sum_{i=1}^{n}(\delta_T(p_i,a) + \delta_T(l_i,a) + \delta_T(r_i,a) 
  \end{align*}

}

\end{proof}

\section{Paths that Optimize Wiener Index}\label{sec:paths}
We consider now the case of spanning paths that optimize the Wiener index.

\begin{theorem} \label{thm:OptNotPlanar}
Let $P$ be a set of $n$ points.
The path that minimizes the Wiener index among all Hamiltonian paths of $P$ is not necessarily planar.
\end{theorem}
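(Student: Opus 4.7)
The plan is to prove the theorem by exhibiting an explicit counterexample: a point set $P$ for which every minimum-Wiener Hamiltonian path has at least one pair of crossing edges. Since the statement is a negative existence result, a concrete construction supported by a direct Wiener-index comparison is the natural approach, rather than a global structural argument of the kind used in Section~\ref{sec:prelim}.

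Before choosing the example, I would explain why the uncrossing argument of Section~\ref{sec:prelim} does not carry over to paths. For a Hamiltonian path $v_1,\dots,v_n$ with edges $e_1,\dots,e_{n-1}$, the Wiener index has the compact form $W=\sum_{k=1}^{n-1} k(n-k)\,|e_k|$, whose weights peak at the middle edges. The only way to uncross two crossing edges $e_i$ and $e_j$ while preserving the Hamiltonian-path structure is a $2$-opt move: reverse the subpath $v_{i+1},\dots,v_j$, which permutes which edges occupy which positions. A short calculation shows the change in $W$ is
\begin{align*}
\Delta W = i(n-i)\bigl(|v_iv_j|-|e_i|\bigr) + j(n-j)\bigl(|v_{i+1}v_{j+1}|-|e_j|\bigr) + (i+j-n)\!\!\sum_{k=i+1}^{j-1}\!\!(2k-i-j)\,|e_k|.
\end{align*}
The triangle inequality applied to the crossing quadrilateral $v_iv_{i+1}v_jv_{j+1}$ only guarantees that the \emph{total} edge weight decreases, not the weighted sum above. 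When $i+j\neq n$ the middle term can be strictly positive, and the weighted versions of the first two differences may also fail to be favourable, so $2$-opt can strictly \emph{increase} $W$.

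Guided by this, I would take $n=5$ and position the points so that there is a Hamiltonian path $\pi^*$ with crossing edges at positions~$1$ and~$3$ (for which $i+j=4\neq 5$), a very short middle edge, and comparatively long outer edges. Using the identity $W=6L-2(|e_1|+|e_4|)$ for $n=5$, where $L$ is the total path length, one sees that small $L$ and large outer edges are the two ways to drive $W$ down; a template that achieves both is a pair of nearly-coincident points straddling a long horizontal segment, together with one carefully placed auxiliary point. The arithmetic in $\Delta W$ above then tells me how to tune the parameters so that every $2$-opt uncrossing of $\pi^*$ strictly increases $W$.

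The main obstacle is the calibration step: the parameters of the construction (the length of the long segment, the gap between the close pair, and the placement of the auxiliary point) must be chosen so that \emph{every} planar Hamiltonian path of $P$ is strictly beaten by $\pi^*$, not only the most obvious $2$-opt uncrossings. Once the configuration is fixed, verification reduces to computing $W$ for $\pi^*$ and for each competing Hamiltonian path of $P$; on five points the symmetries of the construction cut the $60$ candidate paths down to a handful of closed-form arithmetic comparisons, which I expect to handle directly, accompanied by a figure showing $\pi^*$ and the best planar competitor.
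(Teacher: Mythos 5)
Your overall strategy---exhibit an explicit point set and verify by exhaustive comparison of Hamiltonian paths---is the same as the paper's, and your $2$-opt calculation correctly identifies why the uncrossing argument for trees fails for paths (the positional weights $k(n-k)$ are permuted by the reversal, so the quadrilateral inequality no longer forces an improvement). But the proof has a genuine gap: the counterexample is never actually produced. Every load-bearing step is left conditional (``position the points so that\dots'', ``tune the parameters so that\dots'', ``I expect to handle directly''), and for a theorem whose entire content is the existence of a configuration, the configuration itself is the proof. Worse, there is concrete reason to doubt that your $n=5$ template can be calibrated at all. With a long segment $ab$ and a nearly-coincident pair $p,q$ straddling it, the planar path $a\!-\!p\!-\!q\!-\!b$ (with the auxiliary point attached at an end) places the near-zero edge $pq$ at an interior position, where its weight $k(n-k)$ is largest, and avoids the full-length edge $ab$ entirely; in the natural instantiations this beats any path that uses both $ab$ and $pq$, which is what a crossing requires here. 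You would need a mechanism that forces every competitive path to pay for the long edge $ab$ no matter where that edge sits in the path.

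That mechanism is exactly what the paper supplies, and it is why the paper's example is not small: the two endpoints of the long segment are replaced by clusters of $m$ points each (placed at $(0,0)$ and $(6,0)$, with $p=(5,1)$, $q=(5,-1)$, $n=2m+2$). Any spanning path must contain one edge separating the two clusters, and that edge carries a coefficient of order $m^2$; this pins the leading term to $6m^2$ only for paths in which the two clusters are adjacent, reducing the competition to how $p$ and $q$ are attached, where the crossing attachment $p\!-\!q\!-\!P_r\!-\!P_l$ wins in the $\Theta(m)$ term. The paper's conclusion explicitly holds only ``for sufficiently large $n$,'' which suggests the authors also could not make a constant-size example work. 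To repair your proof you should either adopt a cluster-style construction with a parameter that grows, or actually exhibit and verify a fixed small configuration---the latter is not something your current template is shown to deliver.
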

\begin{proof}
Consider the set $P$ of $n=2m+2$ points in convex position as shown in Figure~\ref{fig:notPlanarPathNew}. 
The set $P$ consists of two clusters $P_l$ and $P_r$  and two points $p$ and $q$, where $|P_l|= |P_r|=m$.
The points in cluster $P_l$ are arbitrarily close to the origin $(0,0)$, and
the points in cluster $P_r$ are arbitrarily close to coordinate $(6,0)$. The point $p$ is located on coordinate $(5,1)$ and the
point $q$ is located on coordinate $(5,-1)$.
\begin{figure}[htp]
    \centering
        \includegraphics[width=0.5\textwidth]{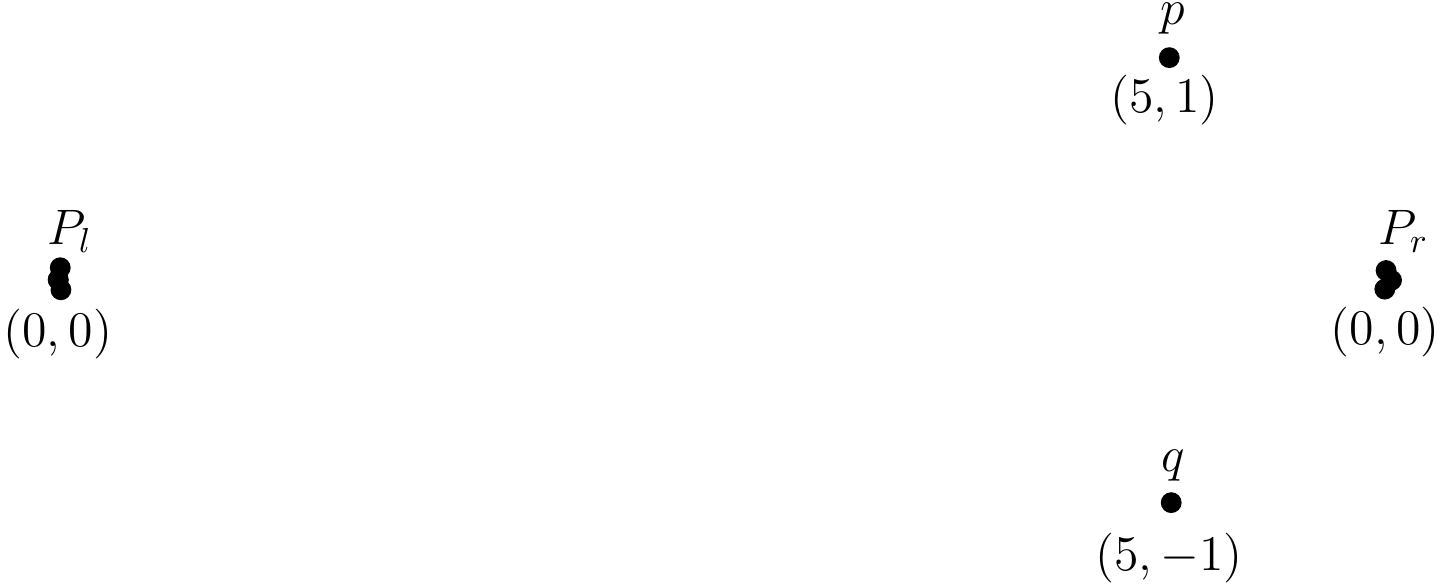}
    \caption{A set $P$ of $n=2m+2$ points in a convex position.}
    \label{fig:notPlanarPathNew}
\end{figure}

For simplicity of computation, we assume that a path connecting the points in $P_l$ has a Wiener index zero, and also 
a path connecting the points in $P_r$ has a Wiener index zero.
Thus, any path $\Pi$ of $P$ that aims to minimize the Wiener index will connect the points in $P_l$ by a path  and the points in $P_r$ by a path. 
We computed the Wiener index of all possible Hamiltonian paths defined on points $(0,0)$, $(6,0)$, $p$, and $q$; see Figure~\ref{fig:TwilveDiffConfig}.
This computation shows that the Hamiltonian path of the minimum  Wiener index is not planar (for sufficiently large $n$).~\qed
\begin{figure}[htp]
    \centering
        \includegraphics[width=0.9\textwidth]{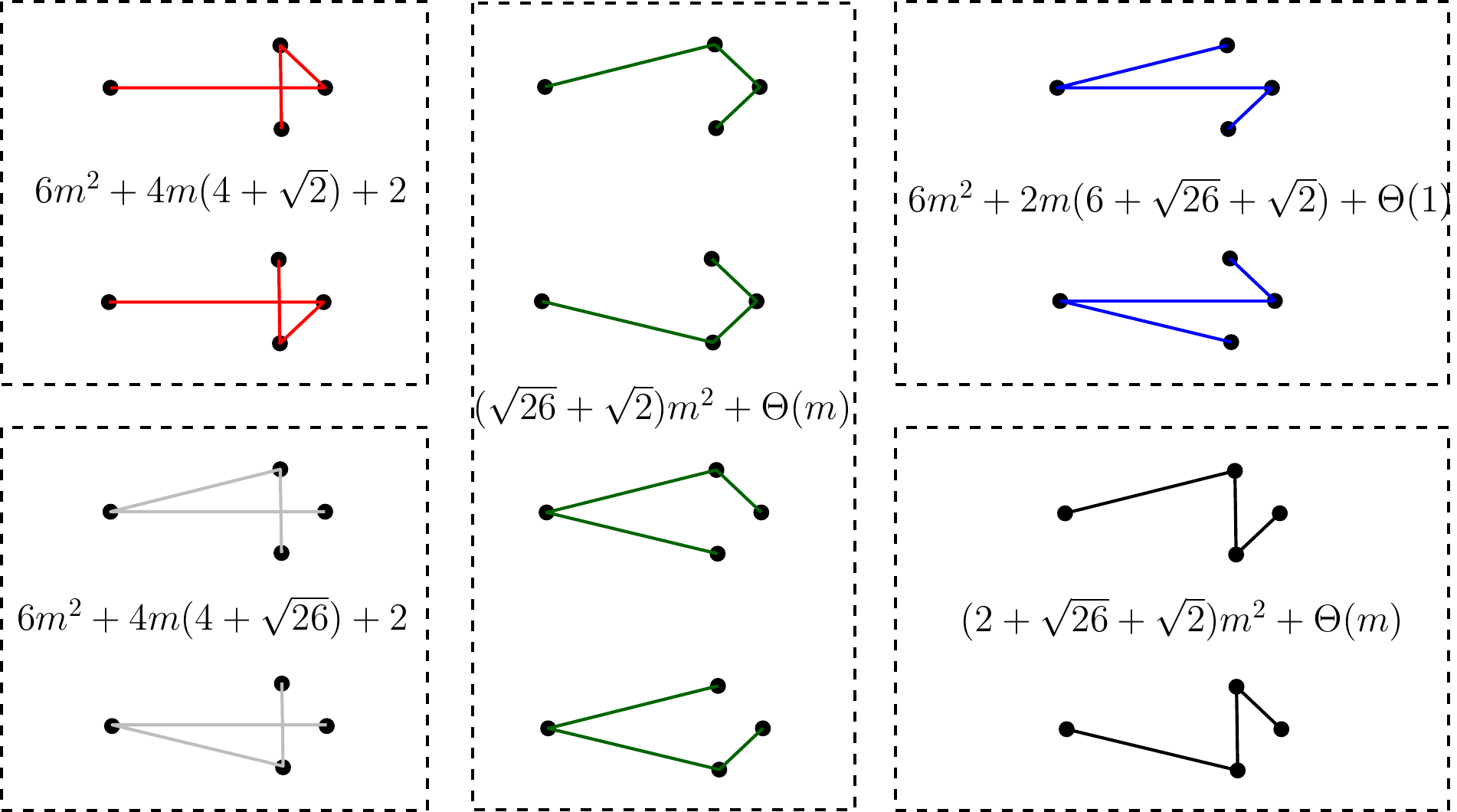}
    \caption{The Wiener index of the 12 possible Hamiltonian paths that are defined on points $(0,0)$, $(6,0)$, $p$, and $q$ (assuming that 
    the $m$ points on $(0,0)$ are connected by a path, and the $m$ points on $(6,0)$ are connected by a path, both of Wiener index zero). }
    \label{fig:TwilveDiffConfig}
\end{figure}

\end{proof}

\begin{theorem} 
For points in the Euclidean plane, it is NP-hard to compute a Hamiltonian path minimizing Wiener index. 
\end{theorem}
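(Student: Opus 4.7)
The plan is to reduce from the Partition problem, adapting the construction of Section~\ref{sec:hardness} to the path setting. The foundation is the closed-form expression
\begin{equation*}
W(\Pi)=\sum_{k=1}^{m-1} k(m-k)\cdot |v_k v_{k+1}|
\end{equation*}
for a Hamiltonian path $\Pi=(v_1,\ldots,v_m)$, which assigns each edge the ``frequency'' coefficient $k(m-k)$, the number of vertex pairs whose unique $\Pi$-path uses that edge. This coefficient is maximized near the middle of the path and is of order $\Theta(m^2)$ there, so middle edges contribute much more than end edges.

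Given an instance $X=\{x_1,\ldots,x_n\}$ of Partition with even sum $R$, I would construct a point set $P$ consisting of two ``anchor'' clusters $A$ and $B$ of $N$ near-coincident points each, placed at two well-separated locations at distance much larger than $R$, together with $n$ item gadgets $G_1,\ldots,G_n$ aligned along the segment between the anchors. Each gadget $G_i$ is a small constellation of constantly many points whose internal distances are scaled by $x_i$ and which, under any local Hamiltonian traversal, offers exactly two options whose Wiener-index contributions differ by an amount linear in $x_i$. The geometry is tuned so that the choice realizing the overall minimum corresponds to a balanced Partition, $\sum_{x_i\in S} x_i = R/2$.

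The proof proceeds in two steps. The first, and main, step is a structural lemma: for $N$ polynomially large in $n$, any Hamiltonian path of near-minimum Wiener index must visit $A$ as a contiguous prefix, $B$ as a contiguous suffix (or vice versa), and the gadgets in natural left-to-right order between them. This requires a careful accounting showing that any deviation introduces at least one long ``backtracking'' edge near the middle of the path, where its coefficient $k(m-k)$ is $\Theta(N^2)$ and the edge therefore costs at least $\Omega(N^2 R)$, dwarfing the savings available by altering gadget choices. The second step is the Partition calculation: under the canonical structure, the Wiener index equals a fixed constant plus a term of the form $\sum_{i=1}^{n}\gamma_i(x_i,S)$, where $\gamma_i$ depends on whether $i\in S$, and this sum is minimized exactly when $\sum_{x_i\in S} x_i=R/2$ by an elementary convexity argument analogous to the one in Section~\ref{sec:hardness}.

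The hardest part will be the structural lemma. The path setting is more delicate than the tree setting, because the coefficient $k(m-k)$ depends on \emph{where} each edge sits in the path, not merely on the induced vertex partition; swapping two points' positions affects the coefficients of several edges simultaneously, and ruling out all non-canonical orderings requires a perturbation argument that simultaneously bounds the global shift in the weighted sum. Once the lemma is in place, the Partition encoding follows by direct computation in the style of the previous section, yielding (weak) NP-hardness of minimum-Wiener-index Hamiltonian paths in the Euclidean plane.
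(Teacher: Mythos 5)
Your proposal is a plan rather than a proof: the two components that would carry all the difficulty --- the gadget construction and the structural lemma forcing a canonical ordering --- are left unspecified, and each hides a real obstacle. First, the ``convexity argument analogous to the one in Section~\ref{sec:hardness}'' does not transfer: that section's argument is bicriteria (a budget $B$ on the weight \emph{and} a bound $W$ on the Wiener index, with an unbalanced partition violating one or the other), whereas here there is a single objective, so you must engineer the gadgets so that the Wiener index \emph{alone} is uniquely minimized at $R'=R/2$; nothing in your sketch shows that the per-gadget contributions $\gamma_i(x_i,S)$ combine into a function of $R'$ minimized at $R/2$ rather than a monotone one. Second, the structural lemma you yourself flag as the hardest part is exactly where such arguments usually fail: with two anchor clusters there is no obvious reason a near-optimal path cannot interleave gadget points with anchor points or reorder gadgets, and your ``one long backtracking edge near the middle'' accounting requires the offending edge to actually sit near the middle of the path, where its coefficient $k(m-k)$ is $\Theta(m^2)$ --- a reordering can place it near an end, where the coefficient is small. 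Until these gaps are closed, the reduction is not established; and even if completed it would only give weak NP-hardness.

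The paper's proof is far simpler and gives a stronger conclusion: it reduces from Hamiltonicity in grid graphs, which is (strongly) NP-hard. Using the identity you state, $W(\Pi)=\sum_{k=1}^{n-1}k(n-k)\,|v_kv_{k+1}|$, and the fact that any two distinct integer grid points are at distance at least $1$, every Hamiltonian path $\Pi$ on the $n$ given grid points satisfies $W(\Pi)\ge\sum_{k=1}^{n-1}k(n-k)=\binom{n+1}{3}$, with equality if and only if every edge of $\Pi$ has length exactly $1$, i.e., if and only if $\Pi$ is a Hamiltonian path of the grid graph. Thus the minimum Wiener index over Hamiltonian paths equals $\binom{n+1}{3}$ precisely when the grid graph has a Hamiltonian path, and no gadgets, anchors, or structural lemmas are needed. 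You should adopt this route, or at minimum recognize that the coefficient identity you begin with already does all the work once the point set is chosen to make ``all edges of unit length'' the unique way to achieve the lower bound.
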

\begin{proof}
We reduce from Hamiltonicity in a grid graph (whose vertices are integer grid points and whose edges join pairs of grid points at distance one).  
First, observe that the Wiener index of a Hamiltonian path of $n$ points, where each edge is of length one, is $\sum_{i=1}^{n - 1} i (n - i) = { n+1 \choose 3}$; see Figure~\ref{fig:HamiltoninPathReduction}.
Thus, it is easy to see that a grid graph $G$ has a Hamiltonian path if and only if there exists a path of Wiener index ${ n+1 \choose 3}$.~\qed
\end{proof}

\begin{theorem}
There exists a set $P$ of $n$ points in the plane, such that the Wiener index of any Hamiltonian path is at least $\Theta(\sqrt{n})$ 
times the  Wiener index of the complete Euclidean graph over $P$
\end{theorem}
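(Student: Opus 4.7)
The plan is to take $P$ to be a $\lceil\sqrt{n}\,\rceil \times \lceil\sqrt{n}\,\rceil$ integer grid (trimmed or padded to contain exactly $n$ points), so that the minimum pairwise distance is $1$ and the diameter is $O(\sqrt{n})$. I expect the entire argument to follow from combining a cheap diameter bound on $W(K_n)$ with the edge-counting identity already used in the previous theorem.

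First I would upper-bound the Wiener index of the complete Euclidean graph over $P$. Since there are $\binom{n}{2}=O(n^2)$ unordered pairs of points and each pairwise Euclidean distance is at most $\mathrm{diam}(P)=O(\sqrt{n})$, we get
\[
W(K_n) \;=\; \sum_{p,q\in P}|pq| \;\leq\; \binom{n}{2}\cdot O(\sqrt{n}) \;=\; O(n^{5/2}).
\]

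Second I would lower-bound $W(\Pi)$ for every Hamiltonian path $\Pi=v_1v_2\cdots v_n$ of $P$. Writing $\ell_k=|v_kv_{k+1}|$ for the length of the $k$-th edge, a direct counting (exactly the identity invoked in the preceding theorem: the $k$-th edge of the path separates $k$ vertices from $n-k$ vertices) gives
\[
W(\Pi) \;=\; \sum_{k=1}^{n-1} k(n-k)\,\ell_k.
\]
Because $P$ sits on the integer grid, every pair of distinct points satisfies $|v_kv_{k+1}|\geq 1$, so $\ell_k\geq 1$ for every $k$, and hence
\[
W(\Pi) \;\geq\; \sum_{k=1}^{n-1} k(n-k) \;=\; \binom{n+1}{3} \;=\; \Omega(n^3).
\]

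Combining the two bounds yields
\[
\frac{W(\Pi)}{W(K_n)} \;=\; \Omega\!\left(\frac{n^3}{n^{5/2}}\right) \;=\; \Omega(\sqrt{n}),
\]
which is the claim. There is essentially no obstacle: the minimum-interpoint-distance lower bound is trivial on an integer grid, and the identity for $W(\Pi)$ is already in the paper. If one wants to certify that the $\Theta(\sqrt n)$ bound is tight (rather than just a lower bound), one simply verifies that the boustrophedon (snake) Hamiltonian path uses only unit edges, so it attains $W(\Pi)=\binom{n+1}{3}=\Theta(n^3)$, matching the ratio $\Theta(\sqrt n)$ from above.
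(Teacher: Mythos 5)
Your proposal is correct and follows essentially the same route as the paper: the same $\sqrt{n}\times\sqrt{n}$ integer grid, the lower bound $W(\Pi)\ge\binom{n+1}{3}=\Theta(n^3)$ via the edge-separation identity together with the unit minimum interpoint distance, and the $O(n^{5/2})$ bound on the complete graph's Wiener index. You simply spell out the details (the diameter-times-pair-count bound and the tightness check via the snake path) that the paper states without elaboration.
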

\begin{proof}
Let  $P$ be a set of $n$ points located on a$\sqrt{n}\times\sqrt{n}$ integer grid.
The Wiener index of any Hamiltonian path of $P$ is at least  ${ n+1 \choose 3}$, which is the Wiener index of a Hamiltonian path whose all its edges are of length one. Thus, the Wiener index of any Hamiltonian path of $P$ is at least $\Theta(n^{3})$. 
On the other hand, the Wiener index of the complete graph over $P$ is $\Theta(n^{2.5})$.~\qed
\end{proof}
\begin{figure}[H]
    \centering
        \includegraphics[width=0.75\textwidth]{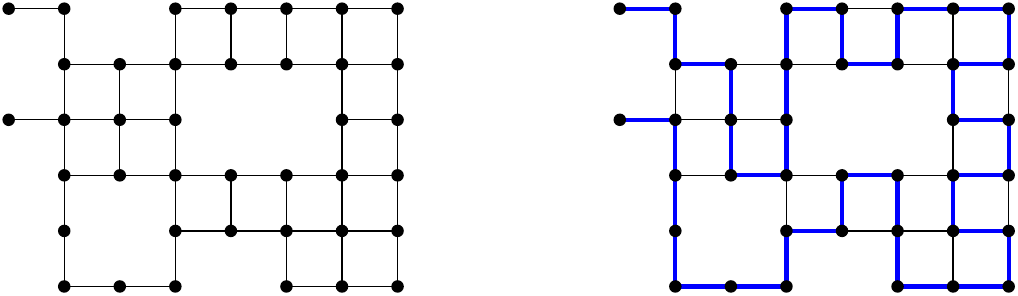}
    \caption{A grid graph $G$ and a Hamiltonian path with Wiener index  ${ n+1 \choose 3}$ in $G$.}
    \label{fig:HamiltoninPathReduction}
\end{figure}
\old{ NOTES TO OURSELVES

\section{Next Steps}

\subsection{Running Time}
Can the running time be improved?
(Prove conditional lower bounds?)

\subsection{More General Cases}
Generalize: Consider a set $S$ of $n$ points on the boundary of a simple polygon, $P$, having $m$ vertices, with the requirement to construct a tree within the polygon $P$ spanning all of $S$. Edges linking two points of $S$ will be geodesic paths within $P$.

Claim1: We still get planarity of an optimal Wiener index tree.

Claim2: Our DP still applies.  (running time now becomes what exactly?  I suspect $O(m+n^4)$.

Variant of this case: The visibility graph, $VG(S)$, is connected, and we are required to build the spanning tree within it. Then the running time will likely be something like $O(|E|^2)$, if $E$ is the set of edges in $VG(S)$.

More generally: What if $P$ has a small number, $h$, of holes? We can likely get $n^{O(h)}$, but might we be able to get FPT in $h$?

Related: What if the points are in the Euclidean plane, organized in $k$ subsets, each in convex position?  Again, $n^{O(k)}$ seems possible (several things to prove), but can we achieve FPT in $k$?

\subsection{$L_1$ Metric}

So far, we have been examining the Euclidean distance. The DP presented does not seem to care what the underlying metric is (Euclidean, $L_1$, geodesic distances with respect to a simple polygon $P$, etc).

But in the plane, does the minimum Wiener index spanning tree or path become easier to analyze and approximate in the $L_1$ metric?  I believe so.  And understanding the $L_1$ case could lead to improved approximation (to $\sqrt{2}$?) for the Euclidean case.
} 

\old{
\section{Concluding Remarks}

\old{
\section{Bounded Degree Tree} \label{Bounded Degree Tree}
 In this section, we show that for any set of points in the plane there exists a tree of bounded degree with Wiener index at most $(2+\epsilon)$ times the Wiener index of the complete graph.
 We show this by construction.

Let $P$ be a set of $n$ points in the plane, let $q$ be a point of $P$, and let $t > 1$ be a real number. A graph having the points of $P$ as its nodes is called a $q$-sink $t$-spanner for $P$, if for every point $p$ of $P$ there is a path in this graph of length at most $t |pq|$ from $p$ to $q$, where 
$|pq|$ is the Euclidean distance between $p$ and $q$.
In~\cite{narasimhanGeometricSpannerNetworks2007}, Narasimhan and Smid showed how to construct such a 
$q-sink$ $t-spanner$ in which each vertex has a bounded degree that only depends on $t$ (the maximal degree is $O(\frac{1}{(t-1)^2})$).  
  
Given a set of points $P$ and a constant $\epsilon > 0$, we construct the bounded degree tree for $P$  
whose Wiener index is at most $(2+\epsilon)$ the Wiener index of the complete graph of $P$ as follows:
\begin{enumerate}
    \item 
    Let $q = p_i \in P$, such that $w(S_i) = \sum\limits_{j = 1}^{n}|p_ip_j|$ is minimal, where
    $w(S_i)$ is the Wiener index of the star with the apex at $p_i$.
    \item
    Construct \emph{$q$-sink $t$-spanner} for $P$ with $t = 1+\frac{\epsilon}{2}$.
\end{enumerate}

\begin{claim}
     The resulted tree, $T$, is $(2+\epsilon)$-Wiener-approximation. 
 \end{claim}
 \begin{proof}
    The point $q$ is the point that minimizes the sum of distances to all the other points of $P$. 
    Next, let $p_1,p_2 \in P$:
    \\Clearly, $d_T(p_1,p_2) \leq d_T(p_1,q) +d_T(q,p_2)$ from triangle inequality.
    \\T is undirected $q-sink-(1+\frac{\epsilon}{2})-spanner$, therefore:
    \begin{equation*}
        \begin{aligned}
            d_T(p_1,q) +d_T(q,p_2) & \leq (1+\frac{\epsilon}{2})|p_1q|+(1+\frac{\epsilon}{2})|qp_2|\\
            & = (1+\frac{\epsilon}{2})\cdot d_{\str}(p_1,q) +(1+\frac{\epsilon}{2})\cdot d_{\str}(q,p_2)\\
            & = (1+\frac{\epsilon}{2})\cdot d_{\str}(p_1,p_2).
        \end{aligned}
    \end{equation*}
    Since the inequality holds for every $p_1,p_2 \in P$, it holds also to the sum of the distances, therefore
    \begin{equation*}
        \begin{aligned}
            W(T) & = \sum\limits_{\{u,v\} \subseteq P} d_{T}(u,v)\\
            & \leq \sum\limits_{\{u,v\} \subseteq P} (1+\frac{\epsilon}{2})\cdot d_{\str}(u,v)\\
            & = (1 + \frac{\epsilon}{2})W(\str).
        \end{aligned}
    \end{equation*}
    From claim \ref{1-star is 2-Wiener-approx} we know $W(\str) \leq 2 W(K_P)$, therefore
    \begin{equation*}
        \begin{aligned}
            W(T) & \leq (1 + \frac{\epsilon}{2})W(\str)\\
            & \leq (1 + \frac{\epsilon}{2})2W(K_P)\\
            & = (2+\epsilon)W(K_P).
        \end{aligned}
    \end{equation*}
    That is, $T$ is $(2+\epsilon)$-Wiener-approximation.~\qed
\end{proof}

In the constructed tree the maximal degree is $O(\frac{1}{(t-1)^2}) = O(\frac{1}{((1 + \frac{\epsilon}{2})-1)^2}) = O(\frac{1}{(\frac{\epsilon}{2})^2})$. 
}
} 

\bibliographystyle{plainurl}

\begin{thebibliography}{10}

\bibitem{abrahamsen2022degree}
M.~Abrahamsen and V.~Fredslund-Hansen.
\newblock Degree of convexity and expected distances in polygons.
\newblock {\em arXiv preprint arXiv:2208.07106}, 2022.

\bibitem{Bonchev02}
D.~Bonchev.
\newblock The {Wiener} number: Some applications and new developments.
\newblock In {\em Topology in Chemistry}, pages 58--88, 2002.

\bibitem{Bonchev77}
D.~Bonchev and N.~Trinajsti{\'{c}}.
\newblock Information theory, distance matrix, and molecular branching.
\newblock {\em The Journal of Chemical Physics}, 67(10):4517--4533, 1977.

\bibitem{BoseCC12}
P.~Bose, P.~Carmi, and L.~Chaitman{-}Yerushalmi.
\newblock On bounded degree plane strong geometric spanners.
\newblock {\em J. Discrete Algorithms}, 15:16--31, 2012.

\bibitem{BoseGS02}
P.~Bose, J.~Gudmundsson, and M.~H.~M. Smid.
\newblock Constructing plane spanners of bounded degree and low weight.
\newblock In {\em ESA}, pages 234--246, 2002.

\bibitem{BoseHS18}
P.~Bose, D.~Hill, and M.~H.~M. Smid.
\newblock Improved spanning ratio for low degree plane spanners.
\newblock {\em Algorithmica}, 80(3):935--976, 2018.

\bibitem{Cardinal04localproperties}
J.~Cardinal, S.~Collette, and S.~Langerman.
\newblock Local properties of geometric graphs.
\newblock {\em Comput. Geom.}, 39(1):55--64, 2008.

\bibitem{Carmi13}
P.~Carmi and L.~Chaitman{-}Yerushalmi.
\newblock Minimum weight {E}uclidean t-spanner is np-hard.
\newblock {\em Journal of Discrete Algorithms}, 22:30--42, 2013.

\bibitem{DBLP:journals/jocg/ChengKLS12}
S.{-}W. Cheng, C.~Knauer, S.~Langerman, and M.~H.~M. Smid.
\newblock Approximating the average stretch factor of geometric graphs.
\newblock {\em J. Comput. Geom.}, 3(1):132--153, 2012.

\bibitem{Kinkar11}
K.~Das, I.~Gutman, and B.~Furtula.
\newblock Survey on geometric-arithmetic indices of graphs.
\newblock {\em Match (Mulheim an der Ruhr, Germany)}, 65:595--644, 01 2011.

\bibitem{dhamdhere2006approximation}
K.~Dhamdhere, A.~Gupta, and R.~Ravi.
\newblock Approximation algorithms for minimizing average distortion.
\newblock {\em Theory of Computing Systems}, 39(1):93--111, 2006.

\bibitem{Dobrynin2001}
A.~A. Dobrynin, R.~Entringer, and I.~Gutman.
\newblock Wiener index of trees: Theory and applications.
\newblock {\em Acta Applicandae Mathematica}, 66(3):211--249, 2001.

\bibitem{FiltserS16}
A.~Filtser and S.~Solomon.
\newblock The greedy spanner is existentially optimal.
\newblock In {\em Proceedings of the 2016 {ACM} Symposium on Principles of
  Distributed Computing (PODC)}, pages 9--17, 2016.

\bibitem{Garey1979}
M.~R. Garey and D.~S. Johnson.
\newblock {\em Computers and Intractability: A Guide to the Theory of
  NP-Completeness}.
\newblock W. H. Freeman \& Co., USA, 1979.

\bibitem{Gonzalez2007}
T.~F. Gonzalez.
\newblock {\em Handbook of Approximation Algorithms and Metaheuristics,
  Chapter~59}.
\newblock Chapman \& Hall/CRC, 2007.

\bibitem{GRAOVAC91}
A.~Graovac and T.~Pisanski.
\newblock On the wiener index of a graph.
\newblock {\em Journal of Mathematical Chemistry}, 8:53--62, 1991.

\bibitem{Harary69}
F.~Harary.
\newblock {\em Graph Theory}.
\newblock Addison-Wesley, 1969.

\bibitem{Hu1974}
T.~C. Hu.
\newblock Optimum communication spanning trees.
\newblock {\em SIAM Journal on Computing}, 3(3):188--195, 1974.

\bibitem{Johnson1978}
D.~S. Johnson, J.~K. Lenstra, and A.~H.~G. {Rinnooy Kan}.
\newblock The complexity of the network design problem.
\newblock {\em Networks}, 8:279--285, 1978.

\bibitem{kier2012molecular}
L.~Kier.
\newblock {\em Molecular connectivity in chemistry and drug research}.
\newblock Elsevier, 1976.

\bibitem{Knor2015}
M.~Knor, R.~Škrekovski, and A.~Tepeh.
\newblock Mathematical aspects of {Wiener} index.
\newblock {\em Ars Math. Contemp.}, 11:327--352, 2015.

\bibitem{LiW04}
X.~Y. Li and Y.~Wang.
\newblock Efficient construction of low weighted bounded degree planar spanner.
\newblock {\em Int. J. Comput. Geom. Appl.}, 14(1-2):69--84, 2004.

\bibitem{mekenyan1983}
O.~Mekenyan, D.~Bonchev, and N.~Trinajsti{\'c}.
\newblock Structural complexity and molecular properties of cyclic systems with
  acyclic branches.
\newblock {\em Croatica Chemica Acta}, 56(2):237--261, 1983.

\bibitem{Mohar88}
B.~Mohar and T.~Pisanski.
\newblock How to compute the wiener index of a graph.
\newblock {\em Journal of Mathematical Chemistry}, 2(3):267, 1988.

\bibitem{narasimhanGeometricSpannerNetworks2007}
G.~Narasimhan and M.~Smid.
\newblock {\em Geometric Spanner Networks}.
\newblock {Cambridge University Press}, 2007.

\bibitem{SHI93}
S.~Ronghua.
\newblock The average distance of trees.
\newblock {\em Journal of Systems Science and Complexity}, 6:18--24, 1993.

\bibitem{sitters2021polynomial}
R.~Sitters.
\newblock Polynomial time approximation schemes for the traveling repairman and
  other minimum latency problems.
\newblock {\em SIAM Journal on Computing}, 50(5):1580--1602, 2021.

\bibitem{Nenad86}
N.~Trinajsti{\'{c}}.
\newblock {\em Mathematical and Computational Concepts in Chemistry}.
\newblock Ellis Horwood, Chichester, 1986.

\bibitem{weiszfeldPointWhichSum2009}
E.~Weiszfeld and F.~Plastria.
\newblock On the point for which the sum of the distances to n given points is
  minimum.
\newblock {\em Annals of Operations Research}, 167(1):7--41, 2009.

\bibitem{Wiener1947}
H.~Wiener.
\newblock Structural determination of paraffin boiling points.
\newblock {\em Journal of the American Chemical Society}, 69(1):17--20, 1947.

\bibitem{Wu2000}
B.~Ye Wu, G.~Lancia, V.~Bafna, K.-M. Chao, R.~Ravi, and C.~Y. Tang.
\newblock A polynomial-time approximation scheme for minimum routing cost
  spanning trees.
\newblock {\em SIAM Journal on Computing}, 29(3):761--778, 2000.

\bibitem{Kexiang14}
K.~Xu, M.~Liu, K.~Das, I.~Gutman, and B.~Furtula.
\newblock A survey on graphs extremal with respect to distance-based
  topological indices.
\newblock {\em Match (Mulheim an der Ruhr, Germany)}, 71:461--508, 02 2014.

\end{thebibliography}

\end{document}